\def\endthebibliography{%
	\def\@noitemerr{\@latex@warning{Empty `thebibliography' environment}}%
	\endlist
}
\DeclareMathOperator*{\argmin}{\arg\min}
\newcolumntype{L}[1]{>{\raggedright\let\newline\\\arraybackslash\hspace{0pt}}m{#1}}
\newcolumntype{C}[1]{>{\centering\let\newline\\\arraybackslash\hspace{0pt}}m{#1}}
\newcolumntype{R}[1]{>{\raggedleft\let\newline\\\arraybackslash\hspace{0pt}}m{#1}}
\newtheorem{theorem}{Theorem}
\newtheorem{lemma}[theorem]{Lemma}
\newlength{\maxwidth}
\newcommand{\algalign}[2]
{\makebox[\maxwidth][r]{$#1{}$}${}#2$}
\newcommand{\multiline}[1]{%
	\begin{tabularx}{\dimexpr\linewidth-\ALG@thistlm}[t]{@{}X@{}}
		#1
	\end{tabularx}
}
\theoremstyle{remark}
\newtheorem{remark}{Remark}
\begin{document}
	
	\title{An Energy-efficient Aerial Backhaul System with Reconfigurable Intelligent Surface}

	\author{Hong-Bae Jeon,~\IEEEmembership{Graduate Student Member,~IEEE}, Sung-Ho Park,~\IEEEmembership{Member, IEEE}, Jaedon Park,\\ 
	Kaibin Huang,~\IEEEmembership{Fellow,~IEEE}, and Chan-Byoung Chae,~\IEEEmembership{Fellow,~IEEE}
			
		\thanks{This research was supported by the Agency for Defense Development, Rep. of Korea. This article was presented in part at the IEEE Globecom 2021, Madrid, Spain, December 7–11, 2021~\cite{URG}. \textit{(Corresponding Author: Chan-Byoung Chae.)}}
\thanks{H.-B. Jeon and C.-B. Chae are with the School of Integrated Technology, Yonsei University, Seoul 03722, Rep. of Korea (e-mail: \{hongbae08, cbchae\}@yonsei.ac.kr).}
\thanks{S.-H. Park is with the Datasolution, Inc., Seoul 06101, Rep. of Korea (e-mail: shpark@datasolution.kr).}
\thanks{J. Park is with the Agency for Defense Development, Daejeon 34186, Rep. of Korea (e-mail: jaedon2@add.re.kr).}
\thanks{K. Huang is with the Department of Electrical and Electronic Engineering, The University of Hong Kong, Pok Fu Lam, Hong Kong (e-mail: huangkb@eee.hku.hk).}
	}
	
	
	
	\maketitle

	\begin{abstract}
	In this paper, we propose a novel wireless architecture, mounted on a high-altitude aerial platform, which is enabled by reconfigurable intelligent surface (RIS). 
By installing RIS on the aerial platform, rich line-of-sight and full-area coverage can be achieved, thereby, overcoming the limitations of the conventional terrestrial RIS. 
We consider a scenario where a sudden increase in traffic in an urban area triggers authorities to rapidly deploy unmanned-aerial vehicle base stations (UAV-BSs) to serve the ground users. In this scenario, since the direct backhaul link from the ground source can be blocked due to several obstacles from the urban area, we propose reflecting the backhaul signal using aerial-RIS so that it successfully reaches the UAV-BSs. We jointly optimize the placement and array-partition strategies of aerial-RIS and the phases of RIS elements, which leads to an increase in energy-efficiency of every UAV-BS. We show that the complexity of our algorithm can be bounded by the quadratic order, thus implying high computational efficiency. We verify the performance of the proposed algorithm via extensive numerical evaluations and show that our method achieves an outstanding performance in terms of energy-efficiency compared to benchmark schemes.
    
	\end{abstract}

	\begin{IEEEkeywords}
		Reconfigurable intelligent surface, unmanned aerial vehicle, wireless backhaul, half-power beamwidth, non-convex optimization, multi-objective optimization, energy-efficient communication.
	\end{IEEEkeywords} 
	
		\IEEEpeerreviewmaketitle
		
	\section{Introduction}
\lettrine{W}{HEN} there are issues concerning the wireless network in terms of ground base stations (BSs) malfunctioning or a massive increase in network traffic in a wireless network, a promising solution is to use unmanned-aerial-vehicles (UAVs) as temporary BSs~\cite{ruizhangmagazine, mozatut, RAFT}. The mobility and high-altitudes of UAV-BSs allow flexible deployment deployment and provide rich line-of-sight (LoS) links~\cite{a2gglobecom}. To efficiently deploy UAV-BSs, researchers have focused on designing the deployment strategies~\cite{mozaeff, lapotp}, trajectory optimization~\cite{traj, mozaD2D, TCD} and power control~\cite{Noh, mozaenergy, eemrtp}. 
 
 One of the implicit conditions for the UAV-BS deployment is that the backhaul rate should be high enough to support the traffic of the UAV-BSs~\cite{DBS}. To satisfy this requirement, the energy-efficiency of the backhaul source should also be considered~\cite{EEGC, sg}. However, providing backhaul links directly from the ground source is inefficient owing to potential blockages, which generate non-LoS (NLoS) links and significantly degrade the system's energy-efficiency. This gives rise to a critical issue for future wireless networks with higher frequencies and wider bandwidths~\cite{heath2014fiveDisrup, WM, cholens, OP, jang2016smart}.
\begin{figure*}[t]
	\begin{center}
		\includegraphics[width=1.9\columnwidth,keepaspectratio]%
		{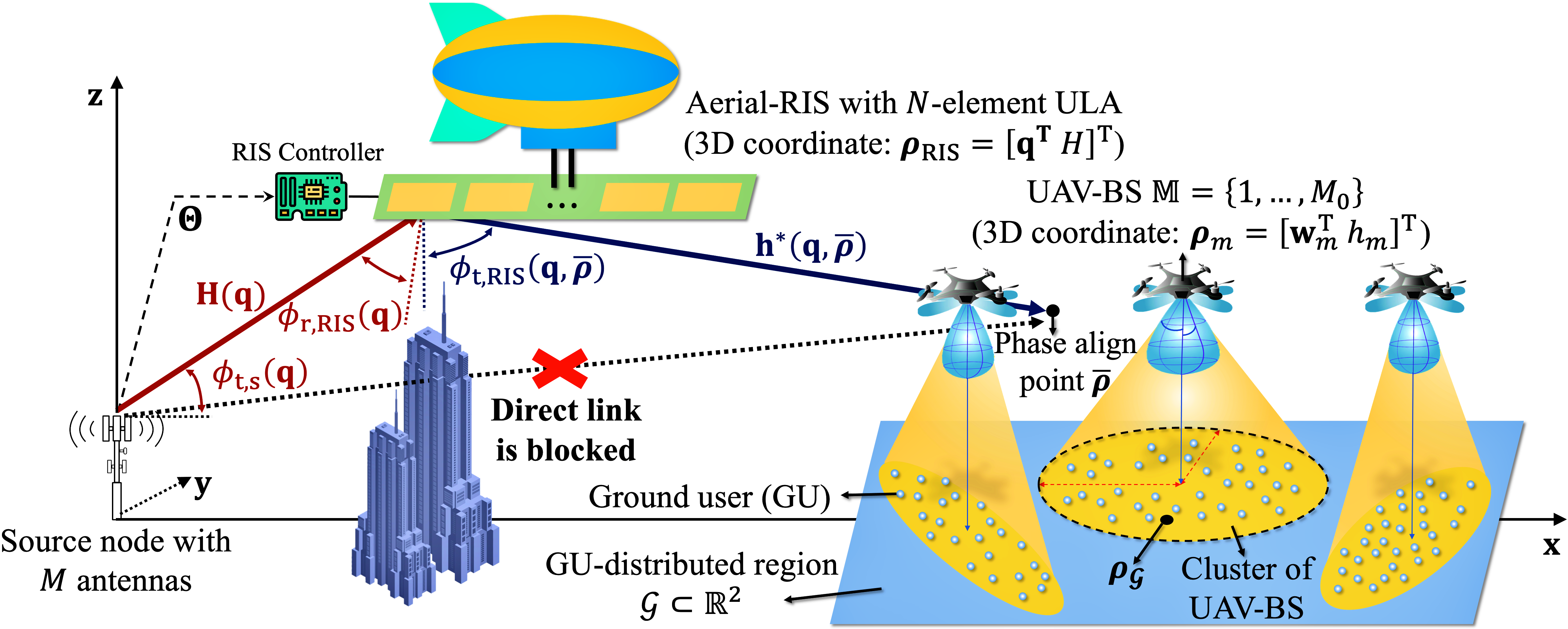}
		\caption{Proposed UAV-BS access network with aerial-RIS backhaul link. The source is equipped with $M$ directional antennas and sends the backhaul signal to $M_0$ UAV-BSs. The RIS is assumed to be implemented on a voluminous aerial platform placed at an altitude of $H$, and reflects the backhaul signal from the source with high-LoS probability.}
		\label{fig_overview}
	\end{center}
\end{figure*}

Addressing the issue motivates researchers to design reconfigurable intelligent surfaces (RISs) to create LoS links by smart reflection and to direct them in desired directions~\cite{RISA, RIST}. RIS is an artificial metasurface composed of passive reflecting elements that can adjust the amplitude and phase of a reflected signal~\cite{meta, PSM}. Moreover, since the array architecture of the RIS is passive, it features low power consumption~\cite{LingRIS}. Therefore, RISs are expected to be deployed in future wireless networks to enhance their performance. In~\cite{RISEE, JAP}, the authors improved the energy-efficiency by jointly allocating transmit power and adjusting phase shifts of RIS. In~\cite{BO, FBT, MIMORIS}, the researchers proposed various RIS transmission strategies, including a combined design of continuous transmit precoder and discrete phase shifts~\cite{BO}, single-to-multi-beam training methods by partitioning RIS array and designing their beam directions~\cite{FBT}, and constructing multiple-input-multiple-output quadrature amplitude modulation (MIMO-QAM) under the RIS hardware constraints such as discrete phase shift and phase-dependent amplitude~\cite{MIMORIS}. Moreover, maximizing the spectral efficiency is studied in~\cite{RO1, RO2, LCSI}. The prior works, however, mainly focuses on deploying RIS in a terrestrial environment that includes buildings and walls. Such an environment poses several limitations on communication performance. First of all, in urban areas with many buildings, reliable communication requires multiple reflections, which requires the installing of many RISs to avoid severe signal attenuation~\cite{RISpath}. Moreover, the terrestrial RIS can only reflect the signal from the source to the destination on the same side of the RIS, which cannot achieve isotropic reflection with $0^{\circ}\sim360^{\circ}$ arrival-angles. Although some attempts have been made to implement the RIS on the aerial platforms~\cite{SRS, UAVRIS, UPARIS}, only the 2D coverage is considered or the fixed transmit power assumed, without considering the energy-efficiency problem.


To address these issues, we, in this paper, propose an energy-efficient aerial backhaul structure~\cite{IAB} by mounting RIS on a sizable aerial platform, as motivated by the results in~\cite{UAVRIS} and illustrated in Fig.~\ref{fig_overview}. By combining the properties of the RIS with the rich-LoS capability of the aerial platform, the aerial-RIS gives rise to a favorable channel with a reduced number of reflections. 
This reduces transmit power for a given data rate and thereby increases the energy-efficiency. Moreover, since the aerial-RIS can reflect the signal in an arbitrary direction~\cite{UAVRIS}, it supports the full-3D coverage, including the receivers (UAV-BSs) in the air. Exploiting these benefits, we further propose an energy-efficiency enhancement strategy. The contributions of this paper are summarized as:
 \begin{enumerate}
 \item We propose to deploy RIS on the aerial platform to establish backhaul links with rich-LoS component to the UAV-BSs, which leverages the advantages of both a high-altitude platform and the RIS. We first adapt the maximum-ratio transmission (MRT) strategy to maximize the received signal-to-noise ratio (SNR) and thereby lower the transmit power.
 \item To achieve a high energy-efficiency, we derive a non-zero lower-bound on the source transmit power for every UAV-BS and minimize the bounds by minimizing and maximizing their numerator and denominator, respectively. It consists of achieving the Pareto-optimum using the method of global criterion, array-partition-mode selection with our proposed “reverse-waterfilling” partitioning strategy, and the exploitation of the relationship between the passive beamforming gain and the sin-angle-deviations.
 \item Through our novel framework of aerial-RIS deployments, we optimize the placement, array-partition, and phase alignment of aerial-RIS to minimize the source transmit power under the fronthaul-rate-ensuring constraints. Furthermore, we show that the computational complexity of the proposed algorithm is upper-bounded by the quadratic order, which shows a high computational efficiency. We numerically estimate the performance of the proposed algorithm in a realistic urban outdoor scenario with randomly distributed ground users and corresponding UAV-BSs. By extensive simulations, we demonstrate substantial improvements of energy-efficiency and considerable diminishment of the optimality gap by the proposed~algorithm.
 \end{enumerate}
 

The rest of the paper is organized as follows: In Section~II, the system model is described, including the target power minimization problem with the fronthaul rate constraints. 
In Section~III, we present our proposed algorithm based on maximizing the lower-bound of the transmit power in the sense of the placement of aerial-RIS, phase alignment, and the RIS array-partition. The computational complexity of the proposed algorithm is also analyzed. We present the numerical results of the proposed algorithm in Section~IV, followed by the conclusion and future works in Section~V.

\textit{Notations}: In this paper, $a$ is a scalar, $\mathbf{a}$ is a column vector, $\mathbf{A}$ is a matrix, and $\mathbb{A}$ is a set. $\mathbf{A}^{\mathrm{T}}, \mathbf{A}^*$ denote the transpose and Hermitian of $\mathbf{A}$, respectively. $\mathrm{diag}\left(\cdot\right)$ denotes a diagonal matrix with given diagonal elements, and $\mathbf{0}$ denotes an all-zero matrix. $\left|\cdot\right|$ denotes the modulus of a complex number or the cardinality of a set and $\left|\left|\cdot\right|\right|_2$ denotes the $\ell_2$-norm of the vector. $\left(a, b\right)$ and $\left[a, b\right]$ are the open and closed interval on the real line, respectively. $\mathbb{R}$ and $\mathbb{C}$ denote the real and complex number sets, respectively, and $\mathbb{C}^{M\times N}$ denotes the space of $M\times N$ complex-valued matrices for natural number $M$ and $N$. $j\triangleq\sqrt{-1}$ is the imaginary unit, and $\mathcal{O}\left(\cdot\right)$ denotes the big-O notation. $\mathcal{E}\left(\cdot\right)$ and $\mathrm{tr}\left(\cdot\right)$ denotes the expectation and trace operator, respectively, and $\mathcal{U}\left(\cdot\right)$ denotes the uniform probability distribution with given interval.
					 \begin{center}
	\begin{table}[t] 
	\centering
		\caption{Summary of System Parameters}
		\begin{tabular}{|>{\centering } m{3cm} |>{\centering} m{4.5cm} |}
			\hline
			\textbf{Paramet{\tiny }er} & \textbf{Description} 
			\tabularnewline
			\hline
			\centering			$\mathbb{M}=\left\{1, \cdots, M_0\right\}$  & Index set of UAV-BSs   \tabularnewline \hline
			\centering			$\mathbb{U}_m$  & Set of users serviced by UAV-BS $m$   \tabularnewline \hline
			\centering			$\bm{\rho}_m\triangleq\left[\mathbf{w}_m^{\mathrm{T}}~h_m\right]^{\mathrm{T}}$  & 3D coordinate of UAV-BS $m$   \tabularnewline \hline
			\centering			$\left(M, N\right)$  & Number of antenna/RIS elements   \tabularnewline \hline
			\centering			$\lambda$  & Link wavelength   \tabularnewline \hline
			\centering			$G_{\mathrm{s}}$  & Antenna gain   \tabularnewline \hline
			\centering			$\left(d_{\mathrm{s}}, d_{\mathrm{RIS}}\right)$  & Antenna and RIS-element separation   \tabularnewline \hline
			\centering			$\left(\bar{d_{\mathrm{s}}}, \bar{d}\right)$  & $\lambda$-normalized\\antenna/RIS-element separation  \tabularnewline \hline
			\centering			$\bm{\rho}_{\mathrm{RIS}}\triangleq\left[\mathbf{q}^{\mathrm{T}}~H\right]^{\mathrm{T}}$  & 3D coordinate of aerial-RIS  \tabularnewline \hline
			\centering			$\beta_0$  & Reference path loss\\at a distance of 1 m  \tabularnewline \hline
			\centering			$\left(\beta_{\mathrm{s}}\left(\mathbf{q}\right), \beta\left(\mathbf{q}, \bm{\rho}_m\right)\right)$  & Path loss\\(source-RIS, RIS-UAV-BS $m$)   \tabularnewline \hline				
			\centering			$R_m$  & Backhaul rate of UAV-BS $m$   \tabularnewline \hline
			\centering			$C_m$  & Throughput of UAV-BS $m$   \tabularnewline \hline	
			\centering			$\sigma^2$  & Noise power  \tabularnewline \hline	
			\centering			\textcolor{red}{$N_{\mathrm{psd}}$}  & \textcolor{red}{Noise PSD ($=-174~\mathrm{dBm}$)}  \tabularnewline \hline	
			\centering			$\bm{\Theta}$  & RIS phase shift matrix   \tabularnewline \hline	
			\centering			$\theta_n$  & Phase shift of the $n$th element of RIS   \tabularnewline \hline	
			\centering			$\mathbf{v}_m$  & Unit-magnitude precoding vector intended to UAV-BS $m$   \tabularnewline \hline
			\centering			$s_m$  & Unit-power signal\\intended to UAV-BS $m$   \tabularnewline \hline
			\centering			$P_m$  & Source transmit power\\intended to UAV-BS $m$   \tabularnewline \hline
			\centering			$P_{\max}$  & Feasible threshold of the\\source transmit power  \tabularnewline \hline
			\centering			$\mathbf{x}$  & Total transmit signal of the source   \tabularnewline \hline
			\centering			$\bar{N}\left(\le N\right)$  & Number of activated RIS elements  \tabularnewline \hline
			\centering			$\left(\mathbf{H}\left(\mathbf{q}\right), \mathbf{h}^*\left(\mathbf{q},\cdot\right)\right)$  & Channel models\\(source-RIS, RIS-destination)  \tabularnewline \hline
			\centering			$\left(\mathbf{a}_{\mathrm{s}}\left(\cdot\right), \mathbf{a}_{\mathrm{RIS}}\left(\cdot\right)\right)$  & Array response (source, RIS)   \tabularnewline \hline				
			\centering			$\left(\phi_{\mathrm{t,s}}\left(\mathbf{q}\right), \phi_{\mathrm{t,RIS}}\left(\mathbf{q}, \cdot\right)\right)$  & AoD of the source-RIS and RIS-destination link   \tabularnewline \hline		
			\centering			$\phi_{\mathrm{r,RIS}}  \left(\mathbf{q}\right)$  & AoA of the source-RIS link   \tabularnewline \hline	
			\centering			$\gamma_m$  & Received SNR of UAV-BS $m$   \tabularnewline \hline
			\centering			$g\left(\cdot\right)$  & Passive beamforming gain by aerial-RIS   \tabularnewline \hline
			\centering			$\Delta\phi_{\mathrm{HPBW}}\left(\cdot\right)$  & Half-power beamwidth of $g$   \tabularnewline \hline
			\centering			$\Delta\phi_m\left(\bar{\bm{\rho}}\right)$ & sin-AoD deviation between $\bm{\bar{\rho}}$ and $\bm{\rho}_m$   \tabularnewline \hline	
		\end{tabular}
		\label{SysPar}
	\end{table}
\end{center}
	\section{System Model}
	\subsection{Aerial Backhaul Model}
The important notations used in this paper are summarized in Table~\ref{SysPar}. As illustrated in Fig.~\ref{fig_overview}, we consider the urban area $\mathcal{G}$ on the horizontal plane with the origin $\bm{\rho}_{\mathcal{G}}$ and containing $N_0$ users equipped with an omnidirectional antenna and serviced by stationary UAV-BSs $\mathbb{M}=\left\{1,\cdots,M_0\right\}$. Additionally, we assume that the UAV-BSs are equipped with a directional antenna where its azimuth and elevation half-power beamwidth (HPBW) are not equal~\cite{beamwidth}. The 3D coordinate of UAV-BS $m$ is given by $\bm{\rho}_m\triangleq\left[\mathbf{w}_m^{\mathrm{T}}~h_m\right]^{\mathrm{T}}$, which consists of the 2D location $\mathbf{w}_m$ and its height $h_m$. 

To avoid the inter-cell interference, we assume that each UAV-BS services a different non-overlapping and non-empty user subset. Here, the Ellipse Clustering~\cite{Noh} is applied for servicing the ground users, which determines $M_0$ and extensively lowers the total transmit power of UAV-BSs by determining their 3D locations. In this case, the throughput of UAV-BS $m$ is given by
\begin{equation}
\label{throughput}
C_m=\sum_{n\in\mathbb{U}_m} \frac{B_{\mathrm{f}}}{\left|\mathbb{U}_m\right|} \log_2 \left(1+ \frac{P_{\mathrm{f},n}}{\sigma_{\mathrm{f}}^2}\right),
\end{equation}
where $\mathbb{U}_m$ is the set of users serviced by UAV-BS $m$, $B_{\mathrm{f}}$ is the transmission bandwidth of the fronthaul link equally divided into $\left|\mathbb{U}_m\right|$ bands for each user in $\mathbb{U}_m$, $P_{\mathrm{f},n}$ is the received power of user $n\in\mathbb{U}_m$, and \textcolor{red}{$\sigma_{\mathrm{f},m}^2=\frac{B_{\mathrm{f}}}{\left|\mathbb{U}_m\right|} N_{\mathrm{psd}}$ is the noise power of the fronthaul link for UAV-BS $m$ with the power spectral density (PSD) of noise $N_{\mathrm{psd}}=-174~\mathrm{dBm}$.}

For a backhaul link via an aerial-RIS, we assume that the source is located at the origin and equipped with uniform-linear-array (ULA) with $M$ antennas. Each antenna has a gain of $G_{\mathrm{s}}$ and is separated by $d_{\mathrm{s}}$. We assume that the source-to-center distance $d_{\mathcal{G}}\triangleq\left|\left|\bm{\rho}_{\mathcal{G}}\right|\right|_2$ is sufficiently large and that the direct link from the source to the UAV-BS is blocked. The aerial-RIS also consists of ULA with $N$ reflecting elements without power amplification, separated by $d_{\mathrm{RIS}}\left(\in\left[\frac{\lambda}{10}, \frac{\lambda}{5}\right]\right)$~\cite{UAVRIS, LCSI}, where $\lambda$ is the carrier wavelength. Note that by apposing the parallel ULA-structured RIS by $N^\prime$ times, it is feasible to extend our scenario to aerial-RIS equipped with an $N\times N^{\prime}$ uniform-planar array (UPA) with expected performance gain of $N^\prime$ times, which can be regarded as the subset optimization of UPA-RIS. With its fixed altitude $H$ and by letting the first reflection element be the reference point, the 3D coordinate of the aerial-RIS is given by $\bm{\rho}_{\mathrm{RIS}}\triangleq\left[\mathbf{q}^{\mathrm{T}}~H\right]^{\mathrm{T}}$. Without loss of generality, we set the RIS array parallel to the x-axis. Moreover, due to the altitude of aerial-RIS, we assume that the backhaul link is dominated by an LoS component. Owing to the sufficiently small $d_{\mathrm{RIS}}$ compared to $H$ and $d_{\mathcal{G}}$, which determine the link distance, the backhaul link can be well approximated as uniform plane waves and we assume that the attenuation of the path can be considered as identical for all RIS-element pairs. Hence, the path loss of source-to-RIS $\beta_{\mathrm{s}}\left(\mathbf{q}\right)$ link and RIS-to-UAV-BS $m$ link $\beta\left(\mathbf{q},  \bm{\rho}_m\right)$ are given by~\cite{ITU525}
\begin{equation}
	\label{gain}
	\beta_{\mathrm{s}}\left(\mathbf{q}\right)=\frac{\beta_0}{\left|\left|\bm{\rho}_{\mathrm{RIS}}\right|\right|_2^2},~\beta\left(\mathbf{q},  \bm{\rho}_m\right)=\frac{\beta_0}{\left|\left|\bm{\rho}_{\mathrm{RIS}}-\bm{\rho}_m\right|\right|_2^2},
\end{equation}
where $\beta_0$ is the reference path loss at a link distance of 1~m given by $\beta_0=-20\log_{10}f-32.45$~[dB] for the link frequency $f$ measured in GHz. As we assumed ULA for both the source and the RIS, the channels between source-RIS $\mathbf{H}\left(\mathbf{q}\right)\in\mathbb{C}^{N\times M}$ and RIS-destination $\mathbf{h}^*\left(\mathbf{q},\cdot\right)\in\mathbb{C}^{1\times N}$ can be modeled by the angle-of-departure and arrival (AoD/AoA) of the communication links, thus containing every information about the azimuth and zenith angles of departure and arrival. Since the links are dominated by LoS path only, they are given~as
\begin{equation}
	\begin{split}
	\label{ULA}
	\begin{cases}
	\mathbf{H}\left(\mathbf{q}\right)\\
	=\sqrt{\beta_{\mathrm{s}} \left(\mathbf{q}\right)}e^{j\Phi_{\mathbf{H}}}e^{-j \frac{2\pi\left|\left|\bm{\rho}_{\mathrm{RIS}}\right|\right|_2}{\lambda}} \mathbf{a}_{\mathrm{RIS}} \left(\phi_{\mathrm{r,RIS}} \left(\mathbf{q}\right)\right) \mathbf{a}_{\mathrm{s}}^* \left(\phi_{\mathrm{t,s}} \left(\mathbf{q}\right)\right),\\
	\mathbf{h}^*\left(\mathbf{q},\bm{\rho}_m\right)\\
	=\sqrt{\beta \left(\mathbf{q}, \bm{\rho}_m\right)}e^{j\Phi_{\mathbf{h}}}e^{-j \frac{2\pi \left|\left|\bm{\rho}_{\mathrm{RIS}} - \bm{\rho}_m\right|\right|_2}{\lambda}} \mathbf{a}_{\mathrm{RIS}}^* \left(\phi_{\mathrm{t,RIS}} \left(\mathbf{q}, \bm{\rho}_m\right)\right),
\end{cases}
\end{split}
\end{equation}
where $\Phi_{\mathbf{H}}$ and $\Phi_{\mathbf{h}}$ are the random phase which are independent and generated with $\mathcal{U}\left[0, 2\pi\right)$, and $\mathbf{a}_{\mathrm{s}}\left(\cdot\right)\in\mathbb{C}^M$, $ \mathbf{a}_{\mathrm{RIS}}\left(\cdot\right)\in\mathbb{C}^N$ are the array response of the source and the aerial-RIS, respectively. They are given by
\begin{equation}
\label{ar}
\begin{split}
\begin{cases}
\mathbf{a}_{\mathrm{s}}\left(\cdot\right)=\left[1~e^{-j2\pi\bar{d}_{\mathrm{s}} \left(\sin\left(\cdot\right)\right)}~\cdots~e^{-j2\pi\left(M-1\right)\bar{d}_{\mathrm{s}} \left(\sin\left(\cdot\right)\right)}\right]^{\mathrm{T}},\\
\mathbf{a}_{\mathrm{RIS}}\left(\cdot\right)=\left[1~e^{-j2\pi\bar{d} \left(\sin\left(\cdot\right)\right)}~\cdots~e^{-j2\pi\left(N-1\right)\bar{d} \left(\sin\left(\cdot\right)\right)}\right]^{\mathrm{T}},
\end{cases}
\end{split}
\end{equation}
where $\bar{d}_{\mathrm{s}} \triangleq \frac{d_{\mathrm{s}}}{\lambda}$ and $\bar{d}\triangleq\frac{d_{\mathrm{RIS}}}{\lambda}$. Finally, $\phi_{\mathrm{t,s}}\left(\mathbf{q}\right)$, $\phi_{\mathrm{t,RIS}}\left(\mathbf{q}, \cdot\right)$ and $\phi_{\mathrm{r,RIS}} \left(\mathbf{q}\right)$ are the AoD of the source-RIS and RIS-destination link and the AoA of the source-RIS link, respectively. To make the targeted problem more tractable, we assume that the ground source perfectly knows all the RIS channels $\left\{\mathbf{h}^* \left(\mathbf{q}, \bm{\rho}_m\right)\right\}_{m=1}^{M_0}$ and $\mathbf{H}\left(\mathbf{q}\right)$, which can be acquired by the methods described in e.g.~\cite{CE1, CE2}.

By concatenating the channels in~(\ref{ULA}) and the phase shift of the RIS, the backhaul rate of UAV-BS $m$ is given by
\begin{equation}
	\label{rate}
	R_m=\frac{B_{\mathrm{b}}}{M_0} \log_2 \left(1+\underbrace{\frac{P_m G_{\mathrm{s}}\left|\mathbf{h}^*\left(\mathbf{q},\bm{\rho}_m\right) \bm{\Theta} \mathbf{H}\left(\mathbf{q}\right)\mathbf{v}_m\right|^2}{\sigma^2}}_{\triangleq\gamma_m}\right),
\end{equation}
where $B_{\mathrm{b}}$ is the transmission bandwidth of the backhaul link equally divided into $M_0$ bands for each UAV-BS, \textcolor{red}{$\sigma^2=\frac{B_{\mathrm{b}}}{M_0} N_{\mathrm{psd}}$ is the noise power of the backhaul link,} $\gamma_m$ is the received SNR of UAV-BS $m$, and $\mathbf{v}_m\in\mathbb{C}^M$ is a unit-magnitude precoding vector corresponding to unit-power signal $s_m$ intended to UAV-BS $m$ with source transmit power $P_m$. The total transmit signal $\mathbf{x}$ of the source is
\begin{equation}
\label{tx}
\mathbf{x}=\sum_{m\in \mathbb{M}} \mathbf{v}_m \sqrt{P_m G_{\mathrm{s}} }s_m.
\end{equation}
Finally, $\bm{\Theta}\triangleq\mathrm{diag}\left(\left\{\alpha_n e^{j\theta_n}\right\}_{n=1}^N\right)\in\mathbb{C}^{N\times N}$ is a diagonal phase shift matrix with the amplitude reflection coefficient $\alpha_n\in\left[0, 1\right]$ and phase shift $\theta_n \in \left[0,2\pi\right)$ of the $n$th element. Since we assume no power amplification of RIS, we can set $\alpha_n=1 \left(\forall n=1,\cdots, N\right)$. Note that the equivalent channel matrix $\mathbf{X}$ in~(\ref{rate}) that concatenates the the channels in~(\ref{ULA}) and the phase shift information is given by $\mathbf{X}\triangleq\mathbf{h}^*\left(\mathbf{q},\bm{\rho}_m\right) \bm{\Theta} \mathbf{H}\left(\mathbf{q}\right)$.
	\begin{figure}[t]
	\centering
	\begin{center}
		\includegraphics[width=0.96\columnwidth,keepaspectratio]%
		{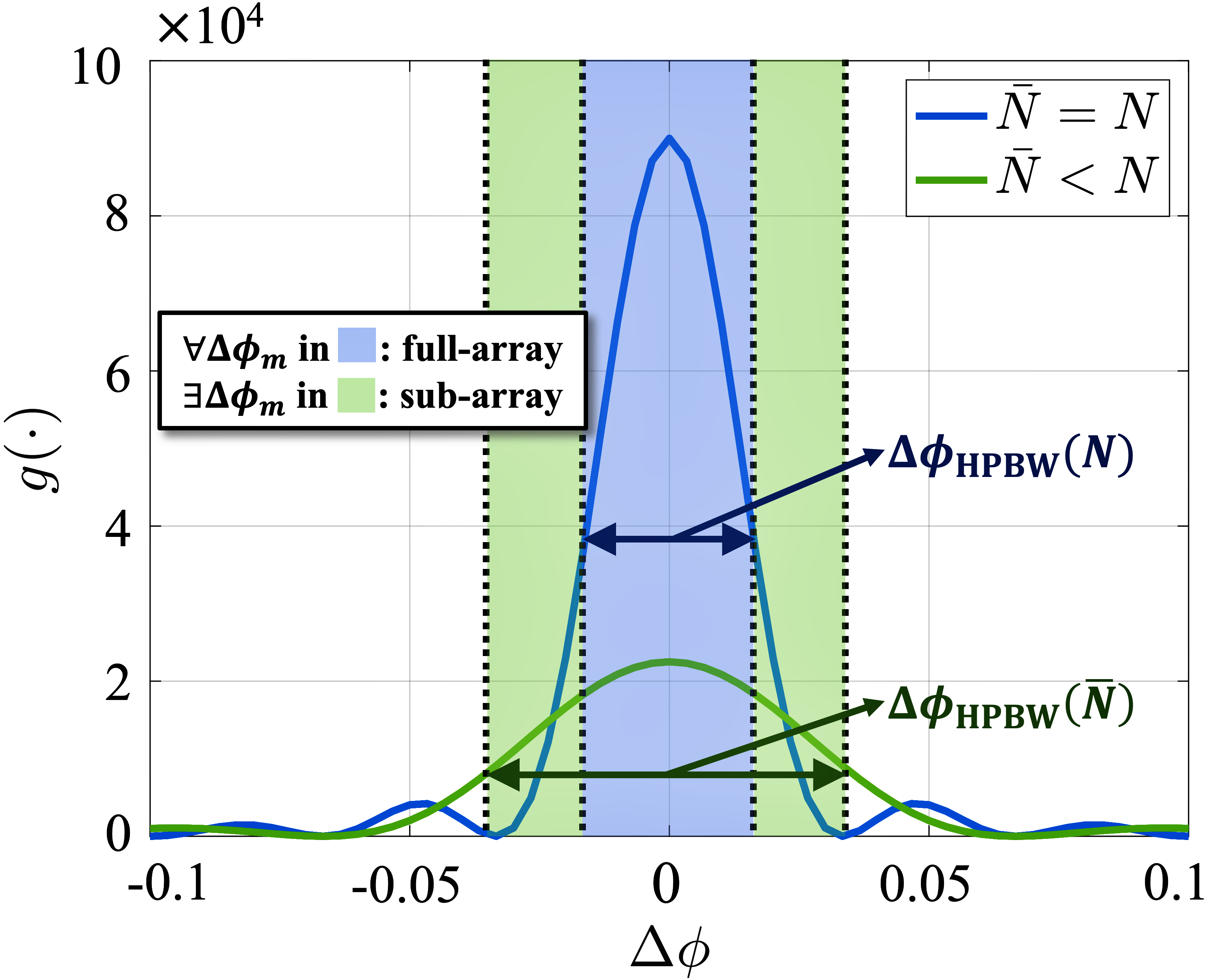}
		\caption{Passive beamforming gain $g$ and the region of full/sub-array structure. If the sin-AoD deviation exceeds the HPBW of the beamforming gain with full-array structure, we apply the sub-array structure to include the deviated point.}
		\label{figgain}
	\end{center}
\end{figure}

Here, to minimize the source transmit power $\sum_{m\in \mathbb{M}} P_m$ for high energy-efficiency~\cite{Noh, mozaenergy}, we first determine the precoding vector $\mathbf{v}_m$ of UAV-BS $m$ that maximizes the received SNR $\gamma_m\triangleq\frac{P_m G_{\mathrm{s}}\left|\mathbf{h}^*\left(\mathbf{q}, {\bm{\rho}}_m\right) \bm{\Theta} \mathbf{H}\left(\mathbf{q}\right)\mathbf{v}_m\right|^2}{\sigma^2}$ for fixed $P_m$, which leads to a smaller transmit power while achieving the same rate. Lemma~\ref{ThmV} presents the result.
	\begin{lemma}
	\label{ThmV}
	To maximize $\gamma_m$, the source should apply an MRT strategy. In other words, $\mathbf{v}_m$ is given by
	\begin{equation}
		\label{vm}
		\mathbf{v}_m=\mathbf{v}\triangleq\frac{\mathbf{a}_{\mathrm{s}}\left(\phi_{\mathrm{t,s}} \left(\mathbf{q}\right)\right)}{\left|\left|\mathbf{a}_{\mathrm{s}}\left(\phi_{\mathrm{t,s}} \left(\mathbf{q}\right)\right)\right|\right|_2}~\left(\forall m\in\mathbb{M}\right).
	\end{equation}
\end{lemma}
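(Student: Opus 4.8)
The plan is to recognize that, for fixed $P_m$, $\bm{\Theta}$, and $\mathbf{q}$, maximizing $\gamma_m$ over unit-norm precoders is simply the maximization of the squared modulus of a linear functional of $\mathbf{v}_m$, a textbook Cauchy--Schwarz problem whose solution is matched filtering. Writing the composite channel as the row vector $\mathbf{X}\triangleq\mathbf{h}^*\left(\mathbf{q},\bm{\rho}_m\right)\bm{\Theta}\mathbf{H}\left(\mathbf{q}\right)\in\mathbb{C}^{1\times M}$, the SNR becomes $\gamma_m=\frac{P_m G_{\mathrm{s}}}{\sigma^2}\left|\mathbf{X}\mathbf{v}_m\right|^2$. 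Since $\left|\mathbf{X}\mathbf{v}_m\right|\le\left|\left|\mathbf{X}\right|\right|_2\left|\left|\mathbf{v}_m\right|\right|_2=\left|\left|\mathbf{X}\right|\right|_2$ with equality iff $\mathbf{v}_m$ is aligned with $\mathbf{X}^*$, the optimal precoder is $\mathbf{v}_m=\mathbf{X}^*/\left|\left|\mathbf{X}\right|\right|_2$.

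The key step is to evaluate $\mathbf{X}^*$ explicitly and show it reduces to the source array response regardless of $m$. Substituting the LoS channel models from~(\ref{ULA}), both $\mathbf{H}\left(\mathbf{q}\right)$ and $\mathbf{h}^*\left(\mathbf{q},\bm{\rho}_m\right)$ factor into a scalar (path-loss, reference-phase, and propagation terms) multiplying array-response vectors. Because $\mathbf{H}\left(\mathbf{q}\right)=c_1\,\mathbf{a}_{\mathrm{RIS}}\left(\phi_{\mathrm{r,RIS}}\left(\mathbf{q}\right)\right)\mathbf{a}_{\mathrm{s}}^*\left(\phi_{\mathrm{t,s}}\left(\mathbf{q}\right)\right)$ is rank-one with its source-side factor fixed at $\mathbf{a}_{\mathrm{s}}^*\left(\phi_{\mathrm{t,s}}\left(\mathbf{q}\right)\right)$, the entire product collapses as $\mathbf{X}=c\,\mathbf{a}_{\mathrm{s}}^*\left(\phi_{\mathrm{t,s}}\left(\mathbf{q}\right)\right)$, where $c$ is a scalar that absorbs the two path losses, the reference-element phases, and the RIS quadratic form $\mathbf{a}_{\mathrm{RIS}}^*\left(\phi_{\mathrm{t,RIS}}\left(\mathbf{q},\bm{\rho}_m\right)\right)\bm{\Theta}\,\mathbf{a}_{\mathrm{RIS}}\left(\phi_{\mathrm{r,RIS}}\left(\mathbf{q}\right)\right)$. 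Taking the conjugate transpose yields $\mathbf{X}^*=c^*\,\mathbf{a}_{\mathrm{s}}\left(\phi_{\mathrm{t,s}}\left(\mathbf{q}\right)\right)$.

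Finally I would normalize and discard the irrelevant phase. Since $\gamma_m$ depends on $\mathbf{v}_m$ only through $\left|\mathbf{X}\mathbf{v}_m\right|^2$, any unit-modulus rotation of the precoder leaves the objective unchanged; choosing the phase so that $c^*/\left|c\right|=1$ gives $\mathbf{v}_m=\mathbf{a}_{\mathrm{s}}\left(\phi_{\mathrm{t,s}}\left(\mathbf{q}\right)\right)/\left|\left|\mathbf{a}_{\mathrm{s}}\left(\phi_{\mathrm{t,s}}\left(\mathbf{q}\right)\right)\right|\right|_2$, exactly~(\ref{vm}). The conceptual point worth stressing---rather than the Cauchy--Schwarz mechanics, which is routine---is that the destination index $m$ and the phase matrix $\bm{\Theta}$ enter $\mathbf{X}$ only through the scalar $c$ and therefore never alter the optimal direction; the rank-one source-to-RIS channel forces a single universal beam toward the RIS for every UAV-BS, which is the content of the ``$\forall m\in\mathbb{M}$'' claim. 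The only mild subtlety to flag is the degenerate case $c=0$ (e.g.\ if $\bm{\Theta}$ annihilates the RIS quadratic form), where $\gamma_m=0$ for all precoders and the stated $\mathbf{v}$ remains optimal by convention.
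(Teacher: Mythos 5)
Your proof is correct, but it takes a somewhat different route from the paper's. The paper forms the Gram matrix $\mathbf{X}^*\mathbf{X}$, observes from~(\ref{ULA}) that it equals $C_0\,\mathbf{a}_{\mathrm{s}}\left(\phi_{\mathrm{t,s}}\left(\mathbf{q}\right)\right)\mathbf{a}_{\mathrm{s}}^*\left(\phi_{\mathrm{t,s}}\left(\mathbf{q}\right)\right)$, i.e.\ a rank-one matrix, and invokes the Rayleigh--Ritz theorem: the unit-norm maximizer of the quadratic form $\mathbf{v}_m^*\mathbf{X}^*\mathbf{X}\mathbf{v}_m$ is the eigenvector of the single non-zero eigenvalue, namely $\mathbf{a}_{\mathrm{s}}\left(\phi_{\mathrm{t,s}}\left(\mathbf{q}\right)\right)/\left|\left|\mathbf{a}_{\mathrm{s}}\left(\phi_{\mathrm{t,s}}\left(\mathbf{q}\right)\right)\right|\right|_2$. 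You instead exploit that $\mathbf{X}$ is a $1\times M$ row vector, so the maximization is a matched-filter/Cauchy--Schwarz problem, and you compute $\mathbf{X}^*=c^*\,\mathbf{a}_{\mathrm{s}}\left(\phi_{\mathrm{t,s}}\left(\mathbf{q}\right)\right)$ explicitly, with the scalar $c$ absorbing the path losses, reference phases, and the RIS quadratic form $\mathbf{a}_{\mathrm{RIS}}^*\left(\phi_{\mathrm{t,RIS}}\left(\mathbf{q},\bm{\rho}_m\right)\right)\bm{\Theta}\,\mathbf{a}_{\mathrm{RIS}}\left(\phi_{\mathrm{r,RIS}}\left(\mathbf{q}\right)\right)$. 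Both arguments rest on the same rank-one factorization of $\mathbf{H}\left(\mathbf{q}\right)$; yours is the more elementary (no eigen-theory needed) and makes the ``$\forall m\in\mathbb{M}$'' claim transparent, since $m$ and $\bm{\Theta}$ enter only through $c$ and hence never change the optimal direction. You also handle the degenerate case $c=0$, which the paper silently excludes by asserting that $C_0$ is a \emph{positive} constant (if the RIS quadratic form vanishes then $C_0=0$, every precoder gives $\gamma_m=0$, and the stated $\mathbf{v}$ remains optimal by convention); this is a small but genuine gain in rigor. What the paper's Rayleigh--Ritz framing buys in exchange is that it generalizes directly to settings where the effective channel $\mathbf{X}$ has more than one row, where Cauchy--Schwarz no longer suffices and the optimal precoder is the principal eigenvector of the Gram matrix.
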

\begin{proof}
	By definition of $\mathbf{X}$
	, it is evident that
	\begin{equation}
		\label{SNR}
		\gamma_m=\frac{P_m G_{\mathrm{s}}}{\sigma^2} \mathbf{v}_m^* \mathbf{X}^* \mathbf{X} \mathbf{v}_m.
		\end{equation}
	By~(\ref{ULA}), $\mathbf{X}^* \mathbf{X}$ has a form of
	\begin{equation}
		\label{RR}
		\mathbf{X}^* \mathbf{X} =C_0  \mathbf{a}_{\mathrm{s}} \left(\phi_{\mathrm{t,s}} \left(\mathbf{q}\right)\right) \mathbf{a}_{\mathrm{s}}^* \left(\phi_{\mathrm{t,s}} \left(\mathbf{q}\right)\right),
		\end{equation}
	where $C_0$ is a positive constant. Equation~(\ref{RR}) implies that $\mathbf{X}^* \mathbf{X}$ is a rank-1 matrix with corresponding non-zero eigenpair:
	\begin{equation}
	\label{ep}
	\left(C_0 \left|\left|\mathbf{a}_{\mathrm{s}}\left(\phi_{\mathrm{t,s}} \left(\mathbf{q}\right)\right)\right|\right|_2^2, \frac{\mathbf{a}_{\mathrm{s}}\left(\phi_{\mathrm{t,s}} \left(\mathbf{q}\right)\right)}{\left|\left|\mathbf{a}_{\mathrm{s}}\left(\phi_{\mathrm{t,s}} \left(\mathbf{q}\right)\right)\right|\right|_2}\right).
	\end{equation}
	Hence, by using $\left|\left|\mathbf{v}_m\right|\right|_2=1$ and the Rayleigh-Ritz theorem, $\mathbf{v}_m$ is given by
	\begin{equation}
	\label{epep}
	\mathbf{v}_m=\frac{\mathbf{a}_{\mathrm{s}}\left(\phi_{\mathrm{t,s}} \left(\mathbf{q}\right)\right)}{\left|\left|\mathbf{a}_{\mathrm{s}}\left(\phi_{\mathrm{t,s}} \left(\mathbf{q}\right)\right)\right|\right|_2}
	\end{equation}
	to maximize $\mathbf{v}_m^* \mathbf{X}^* \mathbf{X} \mathbf{v}_m$ and the theorem follows.
\end{proof}
Note that the optimal transmission strategy is given by MRT since we have eliminated the interference by considering frequency division multiple access (FDMA) in~(\ref{rate}). By Lemma~\ref{ThmV}, MRT is adopted toward the aerial-RIS and $\gamma_m$ can be transformed into
\begin{equation}
\begin{split}
		\label{RSNR}
	\gamma_m=&\frac{P_m G_{\mathrm{s}}}{\sigma^2} \left|\mathbf{h}^*\left(\mathbf{q},\bm{\rho}_m\right) \bm{\Theta} \mathbf{H}\left(\mathbf{q}\right)\frac{\mathbf{a}_{\mathrm{s}}\left(\phi_{\mathrm{t,s}} \left(\mathbf{q}\right)\right)}{\left|\left|\mathbf{a}_{\mathrm{s}}\left(\phi_{\mathrm{t,s}} \left(\mathbf{q}\right)\right)\right|\right|_2}\right|^2 \\=&\bar{\gamma} \frac{\biggl|\sum_{n=1}^N e^{j\left(\theta_n +2\pi \left(n-1\right) \bar{d}\left(\sin \left(\phi_{\mathrm{t,RIS}}\left(\mathbf{q}, \bm{\rho}_m\right)\right)-\sin \left(\phi_{\mathrm{r,RIS}}  \left(\mathbf{q}\right)\right)\right)\right)} \biggr|^2}{\left|\left|\bm{\rho}_{\mathrm{RIS}}\right|\right|_2^2\left|\left|\bm{\rho}_{\mathrm{RIS}}-\bm{\rho}_m\right|\right|_2^2},
\end{split}
\end{equation}
where $\bar{\gamma} \triangleq\frac{P_m G_{\mathrm{s}}\beta_0^2 M}{\sigma^2 }$. From (\ref{RSNR}), we can notice that $\gamma_m$ is invariant of the antenna spacing $d_{\mathrm{s}}$.
\begin{figure*}[t]
	\begin{center}
		\includegraphics[width=1.4\columnwidth,keepaspectratio]%
		{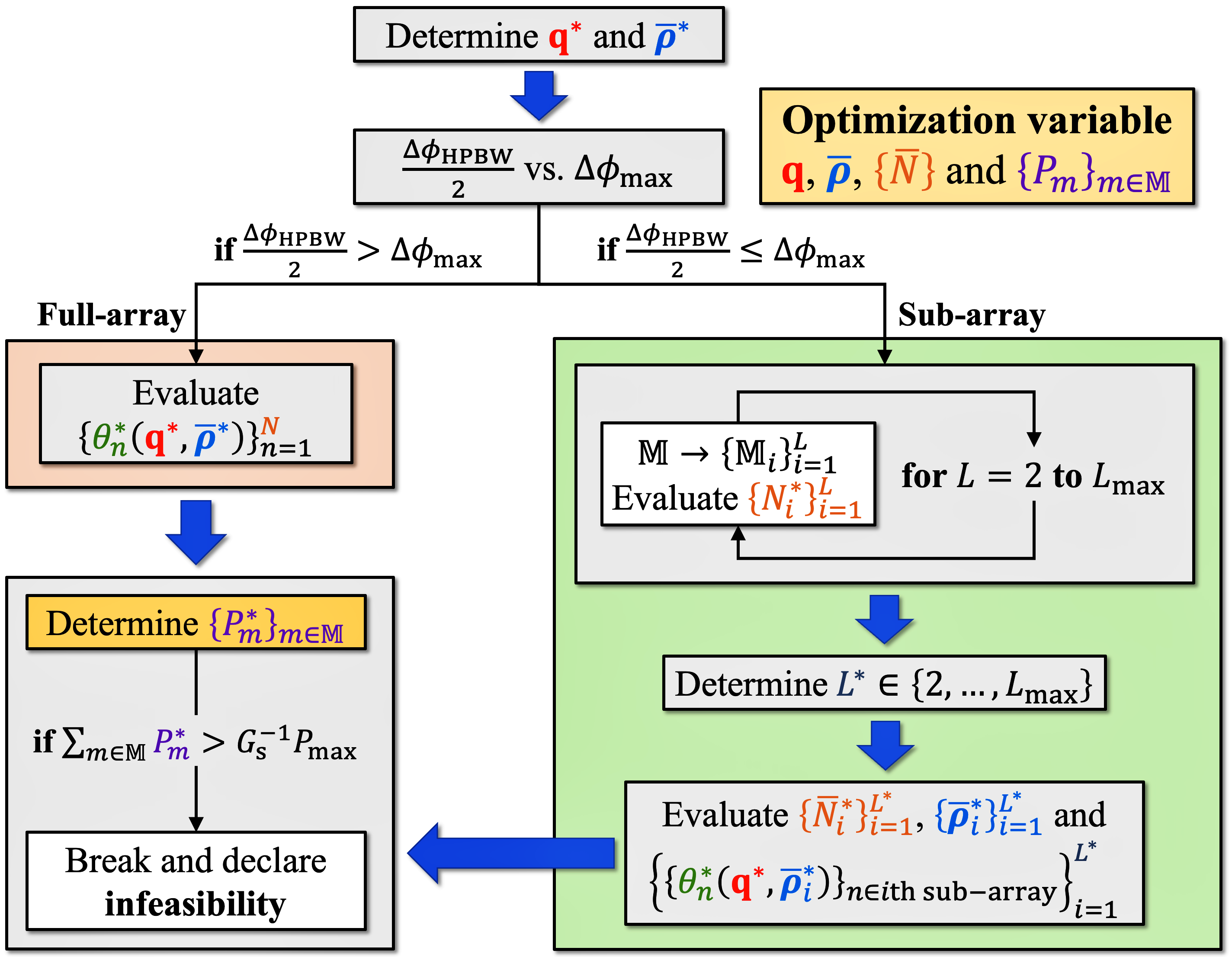}
		\caption{Block diagram of the proposed algorithm.}
		\label{block}
	\end{center}
\end{figure*}
\subsection{Aerial-RIS Beamforming Model}
To maximize $\gamma_m$ for a given $P_m$, we have to align the phase $\left\{\theta_n^*\right\}_{n=1}^N$ such that the reflected signals are coherently added to point $\bm{\rho}_m$:
\begin{equation}
\begin{split}
	\label{phasem}
	&\theta_n^*\left(\mathbf{q},\bm{\rho}_m\right)\\&=\bar\theta - 2\pi \left(n-1\right) \bar{d} \left(\sin \left(\phi_{\mathrm{t,RIS}}\left(\mathbf{q},\bm{\rho}_m\right)\right)-\sin\left(\phi_{\mathrm{r,RIS}}  \left(\mathbf{q}\right)\right)\right),
	\end{split}
\end{equation}
where $\bar\theta\in\left[0,2\pi\right)$ is an arbitrary phase shift. However, as there are multiple $M_0$ UAV-BSs to serve, $\left\{\theta_n^*\right\}_{n=1}^N$ differs for every $m\in\mathbb{M}$. Hence, we should determine the phase align point $\bm{\bar{\rho}}$ that leads to a Pareto-optimum for $\left\{\gamma_m\right\}_{m\in\mathbb{M}}$. In other words, for a given $\mathbf{q}$ and $\bm{\bar\rho}$, we set $\left\{\theta_n^*\right\}_{n=1}^N$ as
\begin{equation}
\begin{split}
	\label{phase}
	&\theta_n^*\left(\mathbf{q},\bm{\bar\rho}\right)\\&=\bar\theta-2\pi \left(n-1\right) \bar{d} \left(\sin \left(\phi_{\mathrm{t,RIS}}\left(\mathbf{q},\bm{\bar\rho}\right)\right)-\sin \left(\phi_{\mathrm{r,RIS}} \left(\mathbf{q}\right)\right)\right),
\end{split}
\end{equation}
which coherently adds the reflected signal to $\bm{\bar\rho}$. By substituting~(\ref{phase}) into~(\ref{RSNR}), $\gamma_m$ becomes
\begin{equation}
	\label{RSNR2}
	\gamma_m = \bar{\gamma}\frac{g\left(\Delta\phi_m\left(\bm{\bar\rho}\right)\right)}{ \left|\left|\bm{\rho}_{\mathrm{RIS}}-\bm{\rho}_m\right|\right|_2^2 \left|\left|\bm{\rho}_{\mathrm{RIS}}\right|\right|_2^2},
\end{equation}
where $g\left(\Delta\phi_m \left(\bm{\bar\rho}\right)\right)$ is the passive beamforming gain by aerial-RIS to $\bm{\rho}_m$ assuming the phases are aligned to $\bm{\bar\rho}$, which is obtained by manipulating $\biggl|\sum_{n=1}^N e^{j\left(\theta_n^*\left(\mathbf{q},\bm{\bar\rho}\right) +2\pi \left(n-1\right) \bar{d}\left(\sin \left(\phi_{\mathrm{t,RIS}}\left(\mathbf{q}, \bm{\rho}_m\right)\right)-\sin \left(\phi_{\mathrm{r,RIS}} \left(\mathbf{q}\right)\right)\right)\right)}\biggr|^2$ in~(\ref{RSNR}), with considering $\bar{N}\left(\le N\right)$ activated RIS elements, and given by
\begin{equation}
	\label{beamforming}
	g\left(\Delta\phi_m\left(\bm{\bar\rho}\right)\right)\triangleq\left|\frac{\sin\left(\pi \bar{N} \bar{d} \Delta \phi_m\left(\bm{\bar\rho}\right)\right)}{\sin \left(\pi \bar{d} \Delta \phi_m\left(\bm{\bar\rho}\right)\right)}\right|^2,
\end{equation}
where $\Delta\phi_m \left(\bm{\bar\rho}\right)$ is the sin-AoD deviation between $\bm{\bar\rho}$ and $\bm{\rho}_m$, that is,
\begin{equation}
\label{aodd}
\Delta\phi_m \left(\bm{\bar\rho}\right)\triangleq\sin \left(\phi_{\mathrm{t,RIS}} \left(\mathbf{q}, \bm{\rho}_m\right)\right)-\sin \left(\phi_{\mathrm{t,RIS}} \left(\mathbf{q}, \bm{\bar\rho}\right)\right).
\end{equation}
As illustrated in Fig.~\ref{figgain}, $g$ diminishes to 0 out of its HPBW, which is given by~\cite{HPBW}
\begin{equation}
\begin{cases}
\label{hpbwww}
\Delta\phi_{\mathrm{HPBW}}\left(\bar{N}\right)\approx\frac{0.8858}{\bar{N}\bar{d}},\\
\Delta\phi_{\mathrm{HPBW}} \left(N\right)\triangleq\Delta\phi_{\mathrm{HPBW}},
\end{cases}
\end{equation}
and the peak gain at $\Delta\phi_m (\cdot)=0$ is ${\bar{N}}^2$. Therefore, we have to fine tune $\bm{\bar\rho}$ and determine maximum $\bar{N}$ such that every UAV-BS locates in the HPBW of $g$, which leads to the maximization of $\left\{g\left(\Delta\phi_m \left(\bm{\bar\rho}\right)\right)\right\}_{m\in\mathbb{M}}$ in Section~III.B.
\subsection{Problem Formulation}
To reliably support the UAV-BSs, the backhaul rate should not be less than the throughput of the fronthaul link $\left\{C_m\right\}_{m\in\mathbb{M}}$: $R_m \ge C_m~\left(\forall m\in\mathbb{M}\right)$. By manipulating the equation using $R_m=\frac{B_{\mathrm{b}}}{M_0} \log_2 \left(1+\gamma_m \right)$, the transmit power $P_m$ must satisfy the following constraint:
\begin{equation}
	\label{Constraint}
	\begin{aligned}
		P_m\ge \left(2^{  \frac{M_0}{B_{\mathrm{b}}} C_m  }-1\right) \frac{  \sigma^2 \left|\left|\bm{\rho}_{\mathrm{RIS}}-\bm{\rho}_m \right|\right|_2^2 \left|\left|\bm{\rho}_{\mathrm{RIS}}\right|\right|_2^2   }{ G_{\mathrm{s}} \beta_0^2 M g\left(\Delta \phi_m\left(\bm{\bar\rho}\right)\right)   }~\left(\forall m\in\mathbb{M}\right).
	\end{aligned} 
\end{equation}
Moreover, by~(\ref{tx}) and Lemma~\ref{ThmV}, the maximum transmit constraint of the ground source is given by~\cite{RISEE}
\begin{equation}
\label{powerC}
\mathcal{E}\left[\left|\bm{\mathrm{x}}\right|^2\right]=\mathrm{tr}\left(G_{\mathrm{s}} \bm{\mathrm{PV^{\mathrm{H}}V}}\right)=G_{\mathrm{s}}\sum_{m\in\mathbb{M}} P_m\le P_{\mathrm{max}},
\end{equation}
wherein $\bm{\mathrm{V}}\triangleq\left[\bm{\mathrm{v}}_1 \cdots \bm{\mathrm{v}}_{M_0}\right]\in\mathbb{C}^{M\times M_0}$ derived by Lemma~\ref{ThmV}, $\bm{\mathrm{P}}\triangleq\mathrm{diag}\left(\left\{P_m\right\}_{m\in\mathbb{M}}\right)\in\mathbb{R}^{M_0\times M_0}$ and $P_{\mathrm{max}}$ is the feasible threshold of the source transmit power. Hence, the source power minimization problem can be formulated of
	\begin{equation}
		\label{object}
		\begin{aligned}
		& \underset{\mathbf{q}, \bm{\bar\rho}, \left\{\bar{N}\right\}, \left\{P_m\right\}_{m\in\mathbb{M}} }{\texttt{min}}~\sum\limits_{m\in\mathbb{M}} P_{m} \\
		&  \text{~~s.t.} ~P_m\ge \left(2^{   \frac{M_0}{B_{\mathrm{b}}} C_m    }-1\right) \frac{  \sigma^2 \left|\left|\bm{\rho}_{\mathrm{RIS}}-\bm{\rho}_m \right|\right|_2^2 \left|\left|\bm{\rho}_{\mathrm{RIS}}\right|\right|_2^2}{ G_{\mathrm{s}} \beta_0^2 M g\left(\Delta \phi_m\left(\bm{\bar\rho}\right)\right)}\\
		&~~~~~\left(\forall m\in\mathbb{M}\right), \sum_{m\in\mathbb{M}} P_m\le G_{\mathrm{s}}^{-1}P_{\mathrm{max}}.
		\end{aligned}	
\end{equation}
		Problem~(\ref{object}) is non-convex due to the product~$\left|\left|\bm{\rho}_{\mathrm{RIS}}-\bm{\rho}_m\right|\right|_2^2\left|\left|\bm{\rho}_{\mathrm{RIS}}\right|\right|_2^2$~and the beamforming function $g$ in~(\ref{beamforming}). We propose to solve it by minimizing the numerator and maximizing the denominator of the lower-bound of $P_m$ in~(\ref{object}) to minimize the sum. Fig.~\ref{block} shows a block diagram that illustrates the main steps for solving~(\ref{object}). In the following section, we discuss, in detail, each block of the proposed algorithm in Fig.~\ref{block}.
		\begin{remark}
		\label{r0}
		Since $C_m>0$, $\left|\left|\bm{\rho}_{\mathrm{RIS}}\right|\right|_2\ge H>0$ and $\left|\left|\bm{\rho}_{\mathrm{RIS}} - \bm{\rho}_m\right|\right|_2 >0$ ($\because \bm{\rho}_{\mathrm{RIS}}$ is extremely close to the origin. See Theorem~\ref{cubiceq} and Fig.~\ref{fig_cubic}), we can conclude that the right-hand side of~(\ref{Constraint}) is not zero, which therefore guarantees non-zero transmit power for every UAV-BS through aerial-RIS.
		\end{remark}
		\section{Proposed Aerial-RIS Setup Algorithm} 
		\subsection{Minimizing the Numerator: Determining $\mathbf{q}$}
		By letting $\mathbf{q}_m$ the 2D location of the aerial-RIS considering UAV-BS $m$ $\left(\bm{\rho}_{\mathrm{RIS}}=\left[\mathbf{q}_m^{\mathrm{T}}~H\right]^{\mathrm{T}}\right)$, 
		we can express the numerator minimization problem in regard to $\mathbf{q}_m$, which is provided~by
			\begin{equation}
			\label{num}
			\begin{aligned}
			&	 \underset{\mathbf{q}_m}{\texttt{min}}~\left|\left|\bm{\rho}_{\mathrm{RIS}}-\bm{\rho}_m \right|\right|_2^2 \left|\left|\bm{\rho}_{\mathrm{RIS}}\right|\right|_2^2 \\
			&	~~~~~~=\left(H^2 + \left|\left|\mathbf{q}_m\right|\right|_2^2 \right)\left(\left(H-h_m\right)^2 + \left|\left|\mathbf{q}_m-\mathbf{w}_m \right|\right|_2^2\right)\\
			& \text{~~s.t.} ~\left|\left|\mathbf{q}_m\right|\right|_2 \ll \delta \left|\left|\mathbf{w}_m\right|\right|_2,
			\end{aligned}	
		\end{equation}
	where $\delta$ is a sufficiently small positive constant. The constraint ``$\left|\left|\mathbf{q}_m\right|\right|_2\ll\delta \left|\left|\mathbf{w}_m\right|\right|_2$" is added to restrict $\mathbf{q}_m$ to be around the origin (source), as placing the RIS close to the source leads to the almost sure use of full-array RIS architecture $\left(\bar{N}=N\right)$, which maximizes the minimum SNR~\cite{UAVRIS} and therefore leads to smaller $\sum_{m\in\mathbb{M}}P_m$. Fortunately, we can find a practical solution for the problem, as stated in Theorem~\ref{cubiceq}.
	\begin{theorem}
	\label{cubiceq}
	The solution of Problem~(\ref{num}) is given by
	\begin{equation}
	\label{Ps}
	\mathbf{q}_m^* = \xi_m \mathbf{w}_m ,
	\end{equation}
	where
	\begin{equation}
	\begin{split}
	\label{xixixixi}
	&\xi_m=\frac{1}{2}+2\sqrt{-\frac{a}{3}}\cos \left( \frac{1}{3} \cos^{-1} \left( \frac{3b}{2a} \sqrt {-\frac{3}{a}} \right)-\frac{4}{3}\pi  \right),
	\end{split}
	\end{equation}
	$a$ and $b$ are given by
	\begin{equation}
	\label{xixixixixixi}
	a\triangleq\frac{1}{2}\left(\zeta_1^2 + \zeta_2^2 \right)- \frac{1}{4},~b\triangleq\frac{1}{4}\left(\zeta_2^2 - \zeta_1^2 \right),
		\end{equation}
		and
		\begin{equation}
		\label{ab}
	 \zeta_1\triangleq\frac{H}{\left|\left|\mathbf{w}_m\right|\right|_2} ,~\zeta_2 \triangleq \frac{\left|H-h_m\right|}{\left|\left|\mathbf{w}_m\right|\right|_2}.
	\end{equation}
		\end{theorem}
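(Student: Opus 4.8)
The plan is to solve Problem~(\ref{num}) in two stages: first collapse the two-dimensional search over $\mathbf{q}_m$ to a one-dimensional search along the ray through $\mathbf{w}_m$, and then minimize the resulting univariate quartic by solving its cubic stationarity condition in closed form.

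First I would argue that the minimizer must be a scalar multiple of $\mathbf{w}_m$. Decompose $\mathbf{q}_m=\xi\mathbf{w}_m+\mathbf{u}$ with $\mathbf{u}^{\mathrm{T}}\mathbf{w}_m=0$. Then $\left|\left|\mathbf{q}_m\right|\right|_2^2=\xi^2\left|\left|\mathbf{w}_m\right|\right|_2^2+\left|\left|\mathbf{u}\right|\right|_2^2$ and $\left|\left|\mathbf{q}_m-\mathbf{w}_m\right|\right|_2^2=(\xi-1)^2\left|\left|\mathbf{w}_m\right|\right|_2^2+\left|\left|\mathbf{u}\right|\right|_2^2$, so both factors of the objective are strictly increasing in $\left|\left|\mathbf{u}\right|\right|_2^2$. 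Any nonzero perpendicular component can therefore only enlarge the product, which forces $\mathbf{u}=\mathbf{0}$ and hence the form $\mathbf{q}_m^*=\xi_m\mathbf{w}_m$ claimed in~(\ref{Ps}).

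Next I would substitute $\mathbf{q}_m=\xi\mathbf{w}_m$, divide the objective by $\left|\left|\mathbf{w}_m\right|\right|_2^4$, and invoke the definitions~(\ref{ab}) of $\zeta_1,\zeta_2$ to reduce Problem~(\ref{num}) to minimizing $g(\xi)=(\zeta_1^2+\xi^2)(\zeta_2^2+(\xi-1)^2)$ over $\xi\in\mathbb{R}$. Setting $g'(\xi)=0$ and dividing out the common factor $2$ yields the cubic $2\xi^3-3\xi^2+(1+\zeta_1^2+\zeta_2^2)\xi-\zeta_1^2=0$. The depressed form follows from the shift $\xi=t+\tfrac{1}{2}$, which annihilates the quadratic term and produces $t^3+at+b=0$ with $a,b$ exactly as in~(\ref{xixixixixixi}); this algebraic matching is the routine bookkeeping that certifies the stated constants.

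The final and most delicate step is to extract the correct root. Since $g$ is a quartic with $g(\xi)\to+\infty$ as $\xi\to\pm\infty$, a global minimizer exists and must be one of the real critical points. I would verify that $a<0$ and that the discriminant places us in the casus irreducibilis, so that all three roots are real and Vi\`ete's trigonometric formula $t_k=2\sqrt{-a/3}\cos\!\left(\tfrac{1}{3}\cos^{-1}\!\left(\tfrac{3b}{2a}\sqrt{-3/a}\right)-\tfrac{2\pi k}{3}\right)$, $k\in\{0,1,2\}$, applies. Because $g'$ is a cubic with positive leading coefficient, its three roots interlace as minimum--maximum--minimum, giving two local minima flanking the interior maximum. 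The hard part is the branch selection: I would show that $k=2$ (the $-\tfrac{4}{3}\pi$ shift) is precisely the stationary point whose $\xi=t+\tfrac{1}{2}$ lies near the origin, rather than the competing minimum near $\xi=1$ above the UAV-BS ($k=0$) or the interior maximum ($k=1$). This near-source root is the one picked out by the feasibility constraint $\left|\left|\mathbf{q}_m\right|\right|_2\ll\delta\left|\left|\mathbf{w}_m\right|\right|_2$ of Problem~(\ref{num}) and is consistent with Remark~\ref{r0}. Once that identification is settled, reading off $t$ through $\xi_m=t+\tfrac{1}{2}$ gives~(\ref{xixixixi}) and completes the proof.
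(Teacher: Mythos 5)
Your proposal is correct and follows essentially the same route as the paper's Appendix~A: collapse the problem to the ray through $\mathbf{w}_m$ via an orthogonality argument (your perpendicular-component decomposition is the paper's projection/Pythagoras step), reduce to the quartic $\left(\xi^2+\zeta_1^2\right)\left(\left(1-\xi\right)^2+\zeta_2^2\right)$, solve the depressed cubic stationarity condition by the trigonometric formula, and select the $k=2$ root as the near-origin local minimizer. Your explicit derivation of the cubic and the shift $\xi=t+\tfrac{1}{2}$, and your use of the constraint $\left|\left|\mathbf{q}_m\right|\right|_2\ll\delta\left|\left|\mathbf{w}_m\right|\right|_2$ to justify the branch choice, substitute cleanly for the paper's Taylor-expansion ordering of the three roots and lead to the same conclusion.
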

		\begin{proof}
See Appendix A.
	\end{proof}
 In Section~IV.B, we numerically verify that $\xi_m$ is greater than 0 but extremely close to 0 under the assumptions, which ascertains that $\mathbf{q}_m=\xi_m\mathbf{w}_m$ is a proper solution for our power-minimization procedure.
 \begin{remark}
 \label{r1}
 For $h_m=0$, the following holds:
 \begin{equation}
 \label{h0}
 \zeta_1 = \zeta_2 \triangleq\zeta \left(=\frac{H}{\left|\left|\mathbf{w}_m\right|\right|_2}\right)\rightarrow a=\frac{1}{2}\zeta^2-\frac{1}{4},~b=0.
 \end{equation}
 Therefore, $\xi_m$ becomes
 \begin{equation}
 \label{hzero}
 \xi_m=\frac{1}{2}-\sqrt{\frac{1}{4}-\zeta^2}.
 \end{equation}
 Equation~(\ref{hzero}) implies that $\xi_m$ goes to 0 for sufficiently large $\left|\left|\mathbf{w}_m\right|\right|_2$ $\left(\because \zeta\rightarrow0\right)$, which also corresponds to the derived~result.
 \end{remark}

After obtaining $\left\{\mathbf{q}_m^*\right\}_{m\in\mathbb{M}}$, we have to determine the location $\mathbf{q}^*$ from the result, which achieves a Pareto-optimum of~(\ref{num}) for every $m\in\mathbb{M}$. Under the prior knowledge of $\left\{\mathbf{q}_m^*\right\}_{m\in\mathbb{M}}$, we consider the following method of the global criterion that minimizes the sum of $\ell_2$-deviation and leads to the Pareto front~\cite{MVO}:
\begin{equation}
	\label{devmin}
	\begin{aligned}
		&	 \underset{\mathbf{q}}{\texttt{min}}~\sum_{m\in \mathbb{M}} \left|\left|\mathbf{q}_m^*-\mathbf{q} \right|\right|_2 .
	\end{aligned}	
\end{equation}
Problem~(\ref{devmin}) is called the Fermat-Torricelli problem, which is convex and can therefore be solved efficiently using the Weiszfeld's algorithm~\cite{FT}, which is proven to strictly converge to the optimal point of the problem~\cite{WF2}. Hence, we can determine $\mathbf{q}^*$ by~(\ref{devmin}), which leads to a suboptimal solution of the numerator minimization.
\begin{remark}
\label{r2}
To maintain small $\left|\left|\mathbf{q}^*\right|\right|_2$, although an outlier in $\left\{\mathbf{q}_m^*\right\}_{m\in\mathbb{M}}$ exists, we minimize the sum of the norm instead of the squared norm in~(\ref{devmin}), which additionally guarantees the robustness of the solution~\cite{MVO}.
\end{remark}
\begin{figure*}[t]
	\centering
	\subfloat[Full-array structure.]{
		\includegraphics[width=0.66\columnwidth]{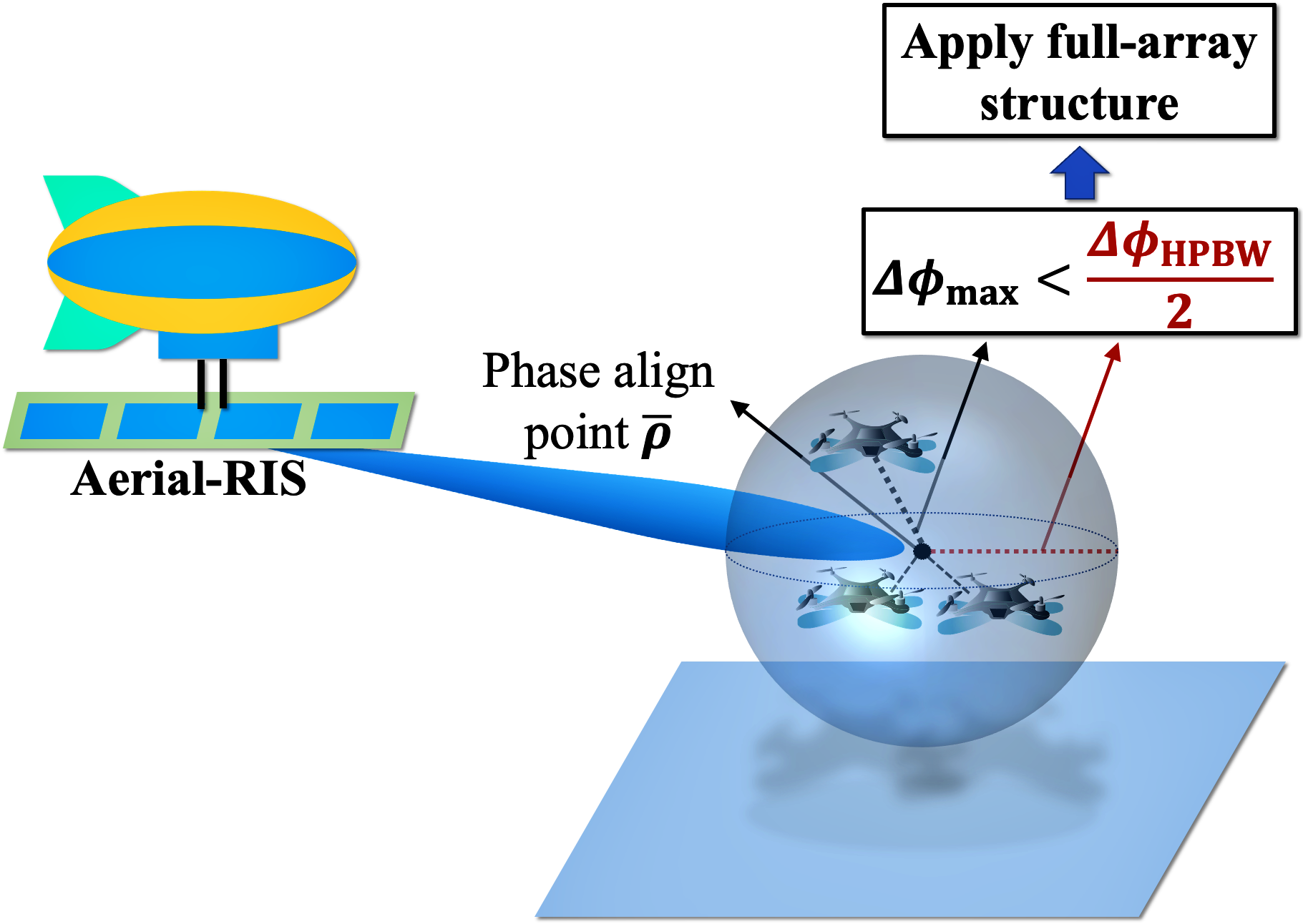}
		\label{figfull}
	}
	\subfloat[Sub-array structure with the proposed array-partition strategy.]{
		\includegraphics[width=1.34\columnwidth]{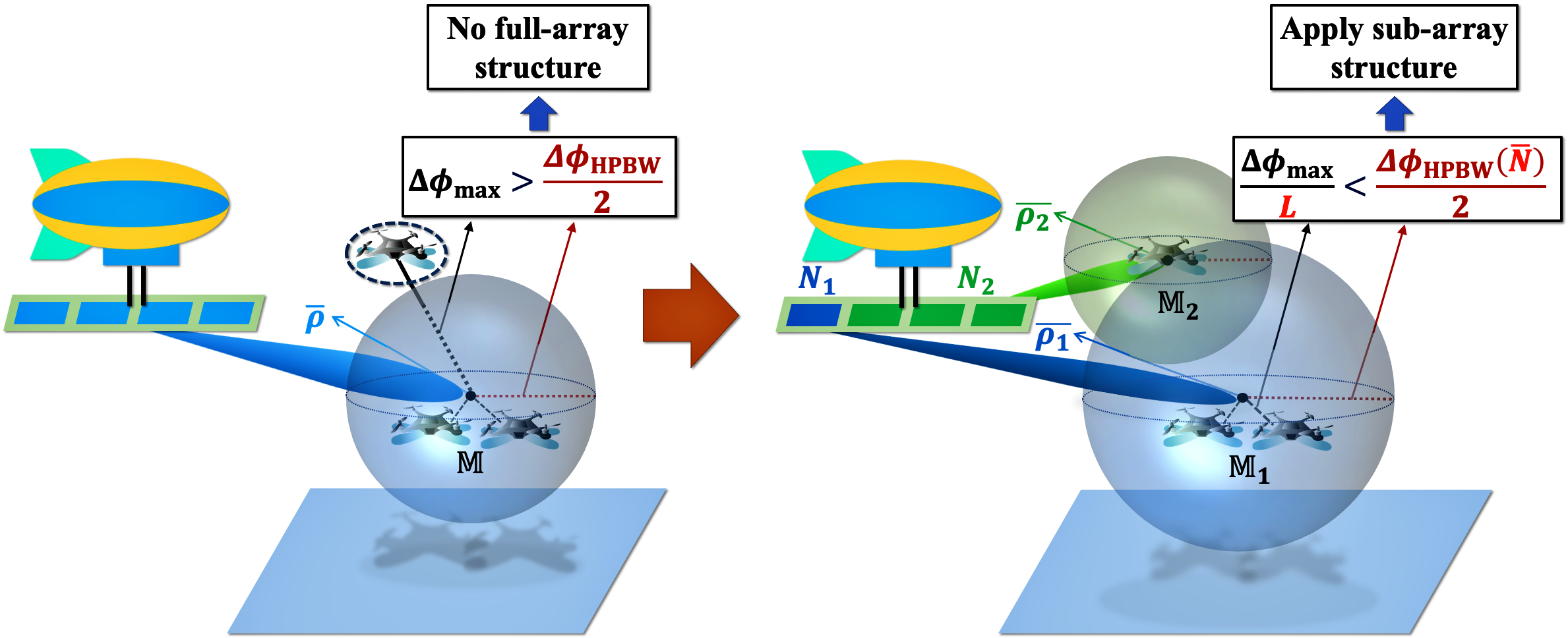}
		\label{figsub}
	}
	\caption{ 
		Array-structure selection of aerial-RIS with respect to HPBW of $g$ and maximum sin-AoD deviation. Due to the peak gain ${\bar{N}}^2$ placed in the denominator in~(\ref{Constraint}), the full-array structure leads to a further reduction in the transmit power.}	\label{fullsub}
\end{figure*}
		\subsection{Maximizing the Denominator: Determining $\{\bar{N}\}$, $\pmb{\bar\rho}$, and $\bm{\Theta}$}
		For a given $m\in\mathbb{M}$, it is known that $\Delta\phi_m\left(\cdot\right)=0$ leads to a maximum denominator, which is equivalent to $\bm{\bar\rho}=\bm{\rho}_m$. By the same logic, we have to determine $\bm{\bar\rho}$ that maximizes $\left\{g\left(\Delta\phi_m(\bm{\bar\rho})\right)\right\}_{m\in\mathbb{M}}$. As $\bm{\rho}_m$ and $\bm{\bar\rho}$ are sufficiently close, $\xi_m \ll1$ and $d_{\mathcal{G}}$ is sufficiently large, we can approximate the absolute value of the sin-AoD deviation by the first-order Taylor approximation with respect to $\bm{\rho}_m$:
		\begin{equation}
		\begin{split}
			\label{approxAoD}
			&\left|\Delta\phi_m\left(\bm{\bar\rho}\right)\right|\\
			&=\left|\sin \left(\phi_{\mathrm{t,RIS}} \left(\mathbf{q}^*, \bm{\rho}_m\right)\right)-\sin \left(\phi_{\mathrm{t,RIS}} \left(\mathbf{q}^*, \bm{\bar\rho}\right)\right)\right|\\
			&\approx\left|\cos\left(\phi_{\mathrm{t,RIS}} \left(\mathbf{q}^*, \bm{\rho}_m\right)\right) \left( \phi_{\mathrm{t,RIS}} \left(\mathbf{q}^*, \bm{\rho}_m\right)-\phi_{\mathrm{t,RIS}} \left(\mathbf{q}^*, \bm{\bar\rho}\right) \right)\right|\\
			&\approx\left|\cos\left(\phi_{\mathrm{t,RIS}} \left(\mathbf{q}^*, \bm{\rho}_m\right)\right)\right|\cdot\left|\left|\bm{\rho}_m-\bm{\bar\rho}\right|\right|_2 ~\left(\forall m\in\mathbb{M}\right).
		\end{split}
		\end{equation}
By~(\ref{approxAoD}), we aim to minimize the weighted $\ell_2$-deviation
\begin{equation}
\label{cosweight}
\left|\cos\left(\phi_{\mathrm{t,RIS}} \left(\mathbf{q}^*, \bm{\rho}_m\right)\right)\right|\cdot\left|\left|\bm{\rho}_m-\bm{\bar\rho}\right|\right|_2
\end{equation}
for every $m\in\mathbb{M}$. Similar to~(\ref{devmin}), to find the Pareto-optimum of $\left\{g\left(\Delta\phi_m\left(\bm{\bar\rho}\right)\right)\right\}_{m\in\mathbb{M}}$, we first find the phase align point $\bm{\bar\rho}^*$ assuming that every RIS array is activated for backhauling every UAV-BS in $\mathbb{M}$ (i.e., a full-array scenario), as illustrated in Fig.~\ref{figfull}. This is performed by solving the following weighted Fermat-Torricelli problem:
		\begin{equation}
		\label{devmin1.5}
		\begin{aligned}
			&	 \underset{\bm{\bar\rho}}{\texttt{min}}~\sum_{m\in \mathbb{M}}w_m \left|\left|\bm{\rho}_m-\bm{\bar\rho} \right|\right|_2\left(\approx\left|\Delta\phi_m\left(\bm{\bar\rho}\right)\right|\right),
		\end{aligned}	
	\end{equation}
		 where $w_m\triangleq\left|\cos\left(\phi_{\mathrm{t,RIS}} \left(\mathbf{q}^*, \bm{\rho}_m\right)\right)\right|~\left(\forall m\in\mathbb{M}\right)$. We can apply the weighted version of the Weiszfeld's algorithm given in Algorithm~\ref{wzf}, which also converges to the solution of~(\ref{devmin1.5}) at the Pareto front~\cite{WF2}.
		\begin{remark}
		\label{r4}
		 Note that the original Fermat-Torricelli problem can also be solved by letting $w_i=1~\left(\forall i=1, \cdots, m\right)$ in Algorithm~\ref{wzf}.
		 \end{remark}
	\begin{algorithm} [t]
	\caption{Weighted Version of the Weiszfeld's Algorithm}\label{wzf}
	\begin{algorithmic}[1]
		\Procedure{Minimizing the sum of $\ell_2$-deviation}{} \newline
		\textbf{Input:} $\left\{\mathbf{z}_i\right\}_{i=1}^m$ (no three points are colinear), $\epsilon_0 \ll 1$, $\left\{w_i\right\}_{i=1}^m \left(>0\right)$, $\mathbf{x}_0\notin \left\{\mathbf{z}_i\right\}_{i=1}^m$
		\State \multiline{Define $\mathbb{I}=\left\{1, \cdots, m\right\}$ and $W:\mathbb{R}^3 \rightarrow \mathbb{R}^3$ by
		\begin{equation}
		\label{wf}
		W\left(\mathbf{x}\right)=\frac{\sum_{i=1}^m w_i \frac{\mathbf{z}_i}{\left|\left|\mathbf{x}-\mathbf{z}_i\right|\right|_2}}{\sum_{i=1}^m w_i \frac{1}{\left|\left|\mathbf{x}-\mathbf{z}_i\right|\right|_2}}
		\end{equation}
		For continuity, define $W\left(\mathbf{z}_i\right)=\mathbf{z}_i \left(\forall i\in\mathbb{I}\right)$}
		\If {$\left|\left|\sum_{j=1, j\neq i}^m w_j \frac{\mathbf{z}_j - \mathbf{z}_i}{\left|\left|\mathbf{z}_j - \mathbf{z}_i\right|\right|_2} \right|\right|_2 >w_i \left(\forall i\in\mathbb{I}\right)$}
		\State Choose an initial point $\mathbf{x}_0\in\mathbb{R}^3 \backslash \left\{\mathbf{z}_i\right\}_{i=1}^m$
		\State Set $k=0$
		\While {$\left|\left|\mathbf{x}_{k+1}-\mathbf{x}_k\right|\right|_2 > \epsilon_0$}
		\State $\mathbf{x}_{k+1}=W\left(\mathbf{x}_k\right)$
		\State $k=k+1$
		\EndWhile
		\State $\mathbf{x}^*=\mathbf{x}_k$
		\Else~$\left(\exists i~\mathrm{s.t.} \left|\left|\sum_{j=1, j\neq i}^m w_j \frac{\mathbf{z}_j - \mathbf{z}_i}{\left|\left|\mathbf{z}_j - \mathbf{z}_i\right|\right|_2} \right|\right|_2 \le w_i\right)$
		\State $\mathbf{x}^*=\mathbf{z}_i$
		\EndIf
		\EndProcedure \\
		\textbf{Output:} $\mathbf{x}^*$
	\end{algorithmic}
\end{algorithm}
		 After deriving $\bm{\bar\rho}^*$, if there is a UAV-BS that locates out of the HPBW of $g$:
		 \begin{equation}
		 \label{star}
		 \Delta\phi_{\max}\triangleq\max_{m\in\mathbb{M}} \left|\Delta\phi_m\left(\bm{\bar\rho}^*\right)\right|>\frac{\Delta\phi_{\mathrm{HPBW}}}{2},
		 \end{equation}
		 we adopt the sub-array scenario $\left(\bar{N}<N\right)$, where the RIS array is divided into several sub-arrays and serve the UAV-BSs in each subset of $\mathbb{M}$, instead of full-array and vice versa. It is because the deviated UAV-BS has almost zero passive beamforming gain and significantly increases the corresponding transmit power (see~(\ref{Constraint}) and Fig.~\ref{figgain}).
		 \subsubsection{Full-array scenario $\left(\bar{N}=N\right)$}
		 For the full-array case, we use the phase align point $\bm{\bar\rho}^*$ derived by~(\ref{devmin1.5}) and the RIS phase $\left\{\theta_n^*\left(\mathbf{q}^*,\bm{\bar\rho}^*\right)\right\}_{n=1}^N$ is derived by~(\ref{phase}).
		\subsubsection{Sub-array scenario $\left(\bar{N}<N\right)$}
		 For the sub-array case, as illustrated in Fig.~\ref{figsub}, we divide the RIS array into $L$ partitions with $\left\{N_i\right\}_{i=1}^L$ elements for given $L$ and $\mathbb{M}$ into $\left\{\mathbb{M}_i\right\}_{i=1}^L$ corresponding to each $N_i$, where $\mathbb{M}_i$ is the set of UAV-BS $m_0$ that corresponds to one of the $L$ equal partitions of the feasible sin-AoD deviation range of $\mathbb{M}$:
	 \begin{equation}
	 	\label{division}
		\Delta\phi_{m_0} \left(\bm{\bar\rho}^*\right)\in \frac{\Delta\phi_{\max}}{L}\left(2\left(i-1\right)-L, 2i-L\right]\triangleq \mathbb{I}_i.	
	 \end{equation}
Next, we have to determine $\left\{N_i\right\}_{i=1}^L$ that leads to minimum transmit power. Assuming that $\bm{\rho}_{\mathrm{RIS}}$ is determined in Section~III.A, and every UAV-BS locates in the HPBW so that $g\approx\bar{N}^2$ holds for every sub-array, the lower-bound of $P_m$ in~(\ref{object}) can be approximated by 
\begin{equation}
\label{lblb}
\left(2^{   \frac{M_0}{B_{\mathrm{b}}} C_m  }-1\right) \frac{  \sigma^2 \left|\left|\bm{\rho}_{\mathrm{RIS}}-\bm{\rho}_m \right|\right|_2^2 \left|\left|\bm{\rho}_{\mathrm{RIS}}\right|\right|_2^2    }{ G_{\mathrm{s}} \beta_0^2 M g\left(\Delta \phi_m\left(\bm{\bar\rho}\right)\right)   }
\approx\frac{A_m}{{\bar{N}}^2}, 
\end{equation}
where $A_m$ is a positive constant. Moreover, we should prevent the ``HPBW-outlier" in every sub-array that satisfies the~following:
\begin{equation}
\label{outlier}
	 \left|\Delta\phi_m\left(\bm{\bar\rho}_i^*\right)\right|>\frac{\Delta\phi_{\mathrm{HPBW}} \left(N_i\right)}{2} ~\left(m\in\mathbb{M}_i\right),
\end{equation}
where $\bm{\bar\rho}_i^*$ is the phase align point of $i$th sub-array which will be determined by maximizing $\left\{g\left(\Delta\phi_m \left(\bm{\bar\rho}_i^*\right)\right)\right\}_{m\in\mathbb{M}_i}$ and described at the end of the section. By~(\ref{division}), the maximum sin-AoD deviation $\max_{m\in\mathbb{M}_i}\left|\Delta\phi_{m}\left(\bm{\bar\rho}_i^*\right)\right|$ of $\mathbb{M}_i$ is reduced by $L$ times~\cite{UAVRIS}, which is $\left|\mathbb{I}_i\right|= \frac{\Delta\phi_{\max}}{L}$. Therefore, to prevent the outlier, $N_i$ should satisfy the following: 
\begin{equation}
\begin{split}
\label{constN}
&  \frac{\Delta\phi_{\max}}{L} \le \frac{\Delta\phi_{\mathrm{HPBW}} \left(N_i\right)}{2}\\
  &~ \rightarrow N_i \le L \frac{\Delta\phi_{\mathrm{HPBW}}/2}{\Delta\phi_{\max}} N\left(\triangleq kN\right).
\end{split}
\end{equation}
Hence, to minimize the sum of the lower-bound of $P_m$ by determining $\left\{N_i\right\}_{i=1}^L$, we consider the following problem.
	\begin{equation}
		\label{water}
		\begin{aligned}
		& \underset{\left\{N_i\right\}_{i=1}^L }{\texttt{min}}~\sum_{i=1}^L\sum_{m\in\mathbb{M}_i} \frac{A_m}{N_{i}^2} \\
		&  \text{~~s.t.}  ~\sum_{\ell=1}^L N_\ell =N,~0\le N_i \le kN~\left(i=1,\cdots,L\right).
		\end{aligned}	
\end{equation}
For fixed $L$, Problem~(\ref{water}) is convex and therefore can be directly solved~\cite{boyd}. The result is given by Theorem~\ref{KKT}.
 			\begin{algorithm} [t]
	\caption{Proposed Aerial-RIS Setup Algorithm}\label{euclid}
	\begin{algorithmic}[1]
	\Procedure{Source power minimization}{}
		\State Find $\mathbf{q}_m^* = \xi_m\mathbf{w}_m$ by~(\ref{solcubic}) and determine $\mathbf{q}^*$ by~(\ref{devmin})
		\State Determine $\bm{\bar\rho}^*$ by~(\ref{devmin1.5}) 
		\State Compare $\frac{\Delta\phi_{\mathrm{HPBW}}}{2}$ and $\Delta\phi_{\max}\triangleq\max_{m\in\mathbb{M}} |\Delta\phi_m(\pmb{\bar\rho}^*)|$
		\If {$\frac{\Delta\phi_{\mathrm{HPBW}}}{2} > \Delta\phi_{\max}$}
				\State Apply full-array structure $\left(\bar{N}=N\right)$
				\State Find the RIS phase $\left\{\theta_n^*\left(\mathbf{q}^*,\bm{\bar\rho}^*\right)\right\}_{n=1}^N$ by~(\ref{phase})
				\Else ~(sub-array structure $\left(\bar{N}<N\right)$)
				\For {$L=2$ to $L_{\max}$}
				\State \multiline{Divide $\mathbb{M}$ into $\left\{\mathbb{M}_i\right\}_{i=1}^L$ based on~(\ref{division})}
				\State Divide the RIS array into $L$ sub-arrays by~(\ref{optimalssss})
		\EndFor
		\State \multiline{Determine~$L^*\in\left\{2, \cdots, L_{\max}\right\}$ by~(\ref{1dsearch})}
		\State Round the solution $\left\{N_i^*\right\}_{i=1}^{L^*} \rightarrow \left\{\bar{N}_i^*\right\}_{i=1}^{L^*}$
		\State Evaluate $\left\{\bm{\bar\rho}_i^*\right\}_{i=1}^{L^*}$ by~(\ref{devmin2})
		\State \multiline{Find $\left\{\left\{\theta_n^*\left(\mathbf{q}^*,\bm{\bar\rho}_i^*\right)\right\}_{n\in i\textrm{th~sub-array}}\right\}_{i=1}^{L^*}$ by~(\ref{phase})}
		\EndIf
		\State Determine $\left\{P_m^*\right\}_{m\in\mathbb{M}}$ by~(\ref{optP})
		\If {$\sum_{m\in\mathbb{M}} P_m^* > G_{\mathrm{s}}^{-1}P_{\mathrm{max}}$}
		\State Break and declare infeasibility
		\EndIf
		\EndProcedure
	\end{algorithmic}
\end{algorithm}
\begin{theorem}
\label{KKT}
The solution of Problem~(\ref{water}) is given by
\begin{equation}
\label{optimalssss}
 N_i^*=\min\left(kN, \sqrt[3]{\frac{\sum_{m\in\mathbb{M}_i} 2A_m}{\mu}}\right) \left(i=1,\cdots,L\right),
 \end{equation}
where $\mu$ is chosen such that $\sum_{\ell=1}^L N_\ell^* = N$ is met.
 \end{theorem}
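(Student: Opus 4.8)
The plan is to recognize Problem~(\ref{water}) as a convex program with a separable, strictly convex objective and linear constraints, and to characterize its minimizer through the Karush--Kuhn--Tucker (KKT) conditions, which are here both necessary and sufficient. First I would verify convexity: each summand $A_m/N_i^2$ is convex on $N_i>0$ (its second derivative $6A_m/N_i^4$ is positive), so the objective $\sum_{i=1}^{L}\big(\sum_{m\in\mathbb{M}_i}A_m\big)/N_i^2$ is convex, while the equality constraint $\sum_\ell N_\ell=N$ and the box constraints $0\le N_i\le kN$ are affine. Slater's condition holds whenever the feasible set has nonempty interior, so strong duality applies and the stationary point satisfying the KKT system is globally optimal.

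Next I would form the Lagrangian
\[
\mathcal{L}=\sum_{i=1}^{L}\frac{\sum_{m\in\mathbb{M}_i}A_m}{N_i^2}+\mu\Big(\sum_{i=1}^{L}N_i-N\Big)-\sum_{i=1}^{L}\lambda_i N_i+\sum_{i=1}^{L}\nu_i\big(N_i-kN\big),
\]
with $\mu$ dual to the element-count equality and $\lambda_i,\nu_i\ge0$ dual to the lower and upper box constraints. Stationarity $\partial\mathcal{L}/\partial N_i=0$ gives $-2\big(\sum_{m\in\mathbb{M}_i}A_m\big)/N_i^3+\mu-\lambda_i+\nu_i=0$, together with complementary slackness $\lambda_i N_i=0$ and $\nu_i\big(N_i-kN\big)=0$.

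The \emph{key simplification} is that the lower box constraint is never active: since every $A_m>0$, the objective diverges as $N_i\to0^+$, so any optimizer has $N_i^*>0$, which forces $\lambda_i=0$ by complementary slackness. Stationarity then reduces to $2\big(\sum_{m\in\mathbb{M}_i}A_m\big)/N_i^3=\mu+\nu_i$, and I would split into two cases. If the upper constraint is slack ($N_i^*<kN$) then $\nu_i=0$ and $N_i^*=\sqrt[3]{2\big(\sum_{m\in\mathbb{M}_i}A_m\big)/\mu}$; if it is tight ($N_i^*=kN$) then $\nu_i\ge0$ yields $2\big(\sum_{m\in\mathbb{M}_i}A_m\big)/(kN)^3\ge\mu$, i.e. the unconstrained candidate $\sqrt[3]{2\big(\sum_{m\in\mathbb{M}_i}A_m\big)/\mu}$ would exceed $kN$. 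The two cases merge exactly into the clamped form $N_i^*=\min\big(kN,\sqrt[3]{\sum_{m\in\mathbb{M}_i}2A_m/\mu}\big)$, which is~(\ref{optimalssss}).

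Finally I would pin down $\mu$ via the primal equality constraint. Because each $N_i^*(\mu)$ in the min-form is continuous and nonincreasing in $\mu$ (strictly decreasing while its upper bound is slack), the map $\mu\mapsto\sum_{i=1}^{L}N_i^*(\mu)$ is continuous and nonincreasing, so there exists a unique $\mu>0$ achieving $\sum_\ell N_\ell^*=N$ whenever the problem is feasible (i.e. $LkN\ge N$). I expect the main obstacle to be the careful bookkeeping of the box constraints---establishing that the lower bound stays inactive and that complementary slackness on the upper bound produces precisely the $\min$ structure rather than some other combination; the convexity and the monotonicity argument fixing $\mu$ are then routine.
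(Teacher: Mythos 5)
Your proposal is correct and follows essentially the same route as the paper's Appendix~B: form the Lagrangian with multipliers for the equality and the two box constraints, argue the lower bound is inactive, and split on whether the upper bound $kN$ is tight to obtain the clamped $\min$ form, with $\mu$ fixed by $\sum_{\ell} N_\ell^* = N$. Your version is marginally tighter in two spots---you justify $N_i^*>0$ by the divergence of the objective as $N_i\to 0^+$ (the paper appeals to the physical requirement of a nonempty partition) and you add the monotonicity argument guaranteeing a unique $\mu$, which the paper leaves implicit---but these are refinements of the same KKT argument, not a different proof.
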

 \begin{proof}
 See Appendix B.
  \end{proof}
  Intuitively, we set initial $N_i^*$ by $\sqrt[3]{\frac{\sum_{m\in\mathbb{M}_i} 2A_m}{\mu}}$, but if the value exceeds the threshold $kN$, we automatically restrict $N_i^*$ by $N_i^*=kN$. We repeat the procedure for $L\in\left\{2, \cdots, L_{\max}\right\}$ for positive integer $L_{\max}\left(\ge2\right)$, and find $L^*$ that leads to minimum objective function in~(\ref{water}) via one-dimensional search over $\left\{2, \cdots, L_{\max}\right\}$, that is,
\begin{equation}
\label{1dsearch}
L^*\triangleq \argmin_{L\in\left\{2, \cdots, L_{\max}\right\}} \sum_{i=1}^L \sum_{m\in\mathbb{M}_i} \frac{A_m}{N_i^{*2}}.
\end{equation}
Finally, since the number of RIS elements is an integer, we can round the result to give the integer solution $\left\{\bar{N}_i^*\right\}_{i=1}^{L^*}$. The proposed array-partition is given in Fig.~\ref{figkkt}. The proposed partition can be considered as the ``reversed" version of the waterfilling algorithm~\cite{CBF, CBF22}, where the role between the vessel and the water is changed.

For phase align point $\left\{\bm{\bar\rho}_i^*\right\}_{i=1}^{L^*}$ of sub-arrays corresponding to $\left\{\mathbb{M}_i\right\}_{i=1}^{L^*}$, by following~(\ref{approxAoD}) and~(\ref{devmin1.5}), we have to solve the following weighted Fermat-Torricelli problem for every $i\in\left\{1,\cdots,L^*\right\}$.
	\begin{equation}
		\label{devmin2}
		\begin{aligned}
			&	 \underset{\bm{\bar\rho}_i}{\texttt{min}}~\sum_{m\in \mathbb{M}_i} w_m \left|\left|\bm{\rho}_{m}-\bm{\bar\rho}_i \right|\right|_2~\left(i=1,\cdots, L^*\right).
		\end{aligned}	
	\end{equation}
Note that $w_m=\left|\cos\left(\phi_{\mathrm{t,RIS}} \left(\mathbf{q}^*, \bm{\rho}_m\right)\right)\right|$ is independent of $\bm{\bar\rho}$. By~(\ref{devmin2}), we can find the suboptimal solution $\left\{\bm{\bar\rho}_i^*\right\}_{i=1}^{L^*}$ for each sub-array and their corresponding RIS phases $\left\{\left\{\theta_n^*\left(\mathbf{q}^*,\bm{\bar\rho}_i^*\right)\right\}_{n\in i\textrm{th~sub-array}}\right\}_{i=1}^{L^*}$ by~(\ref{phase}).
\begin{figure}[t]
	\begin{center}
		\includegraphics[width=0.96\columnwidth,keepaspectratio]%
		{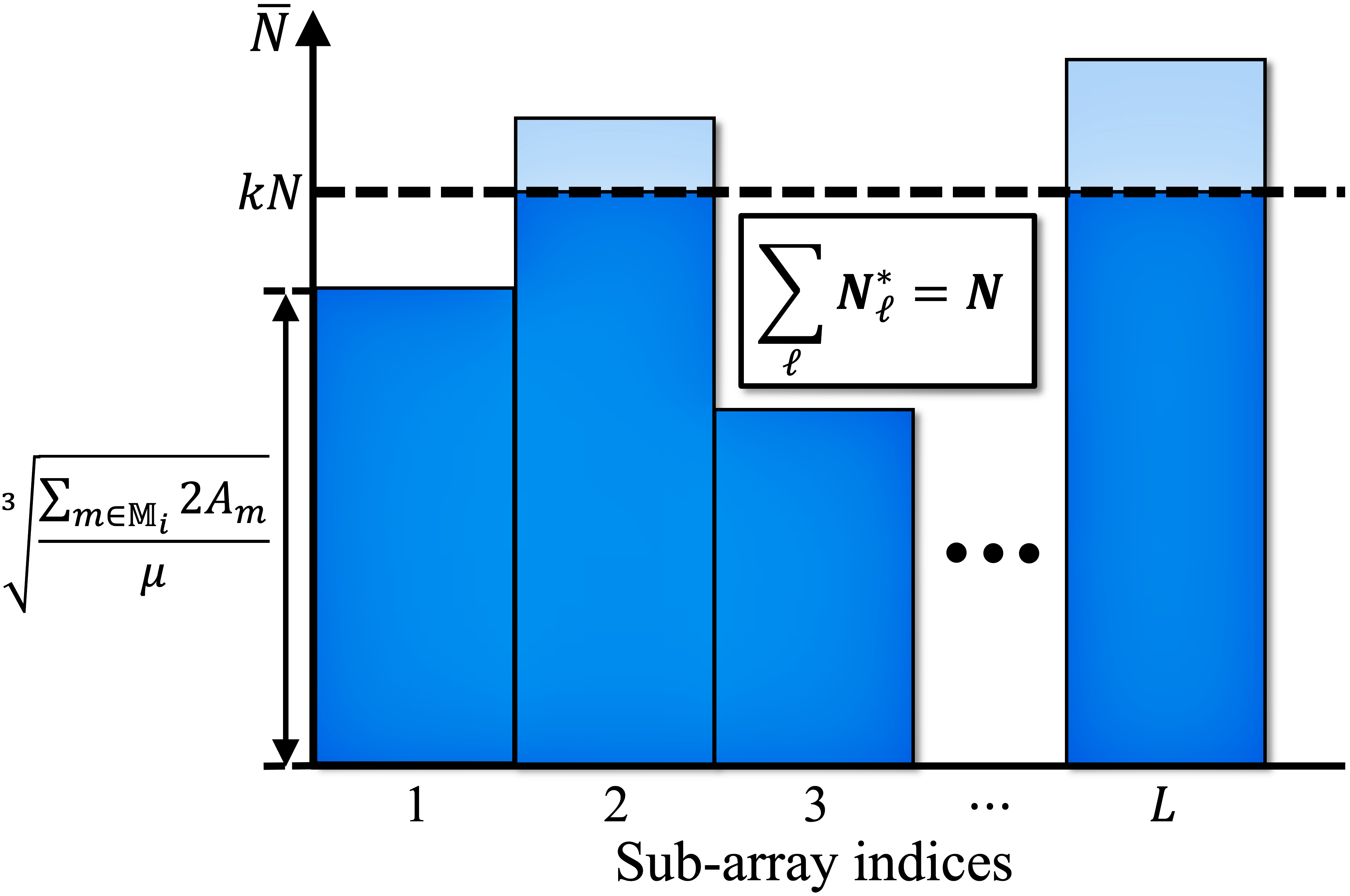}
		\caption{RIS array-partition strategy for sub-array scenario. We restrict the length of the sub-array by $kN$, and hence the lighter region of the graph is ignored in the final result $\left\{N_i^*\right\}_{i=1}^L$.}
		\label{figkkt}
	\end{center}
\end{figure}
\begin{table*}[t]
\centering
\caption{Complexity of the Proposed Algorithm}
\label{CT}
\begin{tabular}{|c|c|c|c|c|c|} 
\hline
\multicolumn{2}{|c|}{\textbf{Section III.A}}           & \multicolumn{2}{c|}{\textbf{Section III.B}}               & \multicolumn{2}{c|}{\textbf{Section III.C}}        \\ 
\hline
\textbf{Parameter}   & \textbf{Complexity}  & \textbf{Parameter}   & \textbf{Complexity}               & \textbf{Parameter}   & \textbf{Complexity}          \\ 
\hline
\multirow{3}{*}{$\left\{\xi_m\right\}_{m\in\mathbb{M}}$ } & \multirow{3}{*}{$\mathcal{O}(M_0)$}  & $\bm{\bar\rho}^*$   & $\le \mathcal{O}\left(I_{\mathbb{M}}M_0+M_0^2\right)$ & \multirow{6}{*}{$\left\{P_m^*\right\}_{m\in\mathbb{M}}$} & \multirow{6}{*}{$\mathcal{O}\left(M_0\right)$}  \\ 
\cline{3-4}
          &                 & $\left\{N_i^*\right\}_{i=1}^L,~\sum_{i=1}^L \sum_{m\in\mathbb{M}} \frac{A_m}{N_i^{*2}}$ & $\mathcal{O}\left(M_0\right)$                   &                    &                      \\ 
\cline{3-4}
             &  & $L^*
             $                  & $\mathcal{O}\left(L_{\max}M_0\right)$   &     &    \\ 
\cline{1-4}
\multirow{3}{*}{$\mathbf{q}^*$}                & \multirow{3}{*}{$\le \mathcal{O}\left(I_{\mathbb{M}} M_0 \right)$~} & $\left\{N_i^*\right\}_{i=1}^{L^*} \rightarrow \left\{\bar{N}_i^*\right\}_{i=1}^{L^*}$   & $\mathcal{O}\left(L^*\right)$ &       &   \\ 
\cline{3-4}
              &   &  $\left\{\bm{\bar\rho}_i^*\right\}_{i=1}^{L^*}$       & $\le \mathcal{O}\left(I_{L^*} M_0+M_0^2\right)$     &        &                      \\ 
\cline{3-4}
   &    & $\bm{\Theta}\triangleq\mathrm{diag}\left(\left\{e^{j\theta_n}\right\}_{n=1}^N\right) \left(\because\forall\alpha_n=1\right)$                   & $\mathcal{O}\left(N\right)$          && \\
\hline
\end{tabular}
\end{table*}
	 \begin{center}
	\begin{table}[t] 
	\centering
		\caption{Simulation Parameters}
		\begin{tabular}{|>{\centering } m{1.6cm} |>{\centering} m{3.9cm} |>{\centering} m{1.8cm} | }
			\hline
			\textbf{Paramet{\tiny }er} & \textbf{Description} & \textbf{Value}
			\tabularnewline
			\hline
			\centering			$B_{\mathrm{b}}$  & Bandwidth of the backhaul link\\(unless referred) & 50 (MHz)  \tabularnewline \hline
			\centering			$M_0$  & Number of UAV-BSs & $\gtrapprox4$~\cite{Noh}  \tabularnewline \hline
			\centering			\textcolor{red}{$\beta_0$}  & \textcolor{red}{Reference path loss\\for sub-6~GHz backhaul} & \textcolor{red}{-43.32 (dB)}  \tabularnewline \hline
			\centering			$\mathcal{G}$  & Targeted urban region& $500\times 500$ (m)  \tabularnewline \hline
			\centering			$\bm{\rho}_{\mathcal{G}}$  & Center of $\mathcal{G}$ (unless referred) &$[1000 ~0]^{\mathrm{T}}$ (m)  \tabularnewline \hline
			\centering                   $\delta$ & Upper-bound of restricting $\mathbf{q}_m^*$ & $10^{-1}$ \tabularnewline \hline
			\centering			$P_{\max}$  & Feasible threshold of\\source transmit power& 30 (dBW) \tabularnewline \hline			
			\centering			$G_{\max}$  & Maximum directional gain & 8 (dB) \tabularnewline \hline
			\centering			$\left(\mathrm{SLA_v}, A_{\max}\right)$  & Vertical side-lobe attenuation and maximum attenuation & 30 (dB)~\cite{NR} \tabularnewline \hline	
			\centering		$H$ & Height of aerial-RIS\\(unless referred) & 150 (m) \tabularnewline \hline
			\centering		$N$ & Number of RIS elements\\(unless referred) & 300 \tabularnewline \hline
			\centering		$M$ & Number of source antennas & 16 \tabularnewline \hline
			\centering		$L_{\max}$ & Upper-bound of\\one-dimensional search & 5 \tabularnewline \hline
			\centering      $\left(d_{\mathrm{s}}, d_{\mathrm{RIS}}\right)$ & Source antenna and\\RIS element separations & $\left(\frac{\lambda}{2}, \frac{\lambda}{10}\right)$~\cite{meta} \tabularnewline \hline
		\end{tabular}
		\label{SimPar}
	\end{table}
\end{center}
\subsection{Determining $\left\{P_m\right\}_{m\in\mathbb{M}}$}
Since we determine the placement of aerial-RIS and the phase align point/points in Sections~III.A and B, we apply the result to the constraints in~(\ref{object}). In turn, Problem~(\ref{object}) becomes linear programming with respect to $\left\{P_m\right\}_{m\in\mathbb{M}}$. The feasible solution satisfies the equality of the lower-bound of the problem for every $m\in\mathbb{M}$. It is given by
\begin{equation}
	\label{optP}
	P_m ^*= \left(2^{   \frac{M_0}{B_{\mathrm{b}}} C_m    }-1\right) \frac{  \sigma^2 \left|\left|\bm{\rho}_{\mathrm{RIS}}^*-\bm{\rho}_m \right|\right|_2^2 \left|\left|\bm{\rho}_{\mathrm{RIS}}^*\right|\right|_2^2    }{ G_{\mathrm{s}} \beta_0^2 M g\left(\Delta \phi_m^*\right)   }~\left(\forall m\in\mathbb{M}\right),
\end{equation}
where $\bm{\rho}_{\mathrm{RIS}}^*=\left[\mathbf{q}^{* \mathrm{T}}~H\right]^{\mathrm{T}}$ and $\Delta\phi_m^*$ is defined by
\begin{equation}
\label{finalphi}
\Delta\phi_m^* \triangleq
\begin{cases}
\Delta\phi_m\left(\bm{\bar\rho}^*\right) & \left(\text{full-array scenario}\right) \\
\Delta\phi_m \left(\bm{\bar\rho}_i^*\right) & \left(\text{sub-array scenario}, m\in\mathbb{M}_i\right).
\end{cases}
\end{equation}
The complete algorithm is summarized in Algorithm~\ref{euclid}.
\begin{remark}
\label{ri}
Although the power is guaranteed to be positive in \textit{Remark}~\ref{r0}, no strict feasibility claim can be made since we cannot always guarantee $\sum_{m\in\mathbb{M}} P_m^* \le G_{\mathrm{s}}^{-1}P_{\mathrm{max}}$ by~(\ref{optP}). In Section IV, the practical transmit power given by the proposed algorithm is way less (below 40$\sim$45 dBm) than the feasible threshold (the feasibility rate is 100$\%$ for our algorithm), and this ensures the almost-sure feasibility. Therefore, we can conclude that the consideration of the infeasibility scenario is~negligible.
\end{remark}
		\subsection{Complexity Analysis}
		The computational complexity of the proposed algorithm consists of three parts. In Section~III.A, we first choose $\left\{\xi_m\right\}_{m\in\mathbb{M}}$ by~(\ref{solcubic}), which requires a complexity of $\mathcal{O}\left(M_0\right)$. Next, we determine $\mathbf{q}^*$ by~(\ref{devmin}) and Algorithm~\ref{wzf} with complexity upper-bounded by $\mathcal{O}\left(I_{\mathbb{M}} M_0\right)$, where $I_{\left(\cdot\right)}$ is the number of iterations from procedure 4$\sim$7 in the Weiszfeld's algorithm for given set.
		In Section~III.B, we first find $\bm{\bar\rho}^*$ by~(\ref{devmin1.5}) with complexity upper-bounded by $\mathcal{O}\left(I_{\mathbb{M}} M_0 + M_0^2 \right)$. For full-array scenarios, we align the RIS phases by~(\ref{phase}), which requires a complexity of $\mathcal{O}\left(N\right)$. For sub-array scenarios, we find $\left\{N_i^*\right\}_{i=1}^L$ for given $L$ by~(\ref{optimalssss}) with a complexity of $\sum_{i=1}^L \mathcal{O}\left(\left|\mathbb{M}_i\right|\right)=\mathcal{O}\left(M_0\right)$, and compute the objective function $\sum_{i=1}^L \sum_{m\in\mathbb{M}_i} \frac{A_m}{N_i^{*2}}$, which also has a complexity of $\sum_{i=1}^L \mathcal{O}\left(\left|\mathbb{M}_i\right|\right)=\mathcal{O}\left(M_0\right)$. We repeat the procedure for $L=2, \cdots, L_{\max}$ and search $L^*$ by~(\ref{1dsearch}), which has a complexity of $\mathcal{O}\left(L_{\max} M_0\right)$. We finally round the solution $\left\{N_i^*\right\}_{i=1}^{L^*}$ with a complexity of $\mathcal{O}\left(L^*\right)$. Next, we find the phase align point $\left\{\bm{\bar\rho}_i^*\right\}_{i=1}^{L^*}$ by~(\ref{devmin2}) with complexity upper-bounded by $\sum_{i=1}^{L^*} \mathcal{O}\left(I_{\mathbb{M}_i}\left|\mathbb{M}_i \right| + \left|\mathbb{M}_i\right|^2\right) \le \mathcal{O}\left(I_{L^*}M_0+M_0^2\right)$, where $I_{L^*}\triangleq\max_{i\in\left\{1,\cdots,L^*\right\}} I_{\mathbb{M}_i}$. Then, we align the RIS phases by~(\ref{phase}), which has a complexity of $\mathcal{O}\left(N\right)$. Finally, in Section~III.C, we determine $\left\{P_m^*\right\}_{m\in\mathbb{M}}$ by~(\ref{optP}), which requires a complexity of $\mathcal{O}\left(M_0\right)$.
		
		Hence, the total complexity is upper-bounded by
		\begin{equation}
		\label{FC}
		\begin{split}
		&\mathcal{O}\left(\left(I_{\mathbb{M}} + L_{\max} + I_{L^*} + M_0 \right)M_0 +L^*+N\right)\\&\approx \mathcal{O}\left(\left(I_{\mathbb{M}}  + I_{L^*} +M_0 \right)M_0 +N\right)~\left(\because I_{\left(\cdot\right)} > L_{\max}\ge L^*\right),
		\end{split}
		\end{equation}
		which does not exceed the quadratic order. We can therefore conclude that our proposed algorithm is not only energy-efficient but also complexity-efficient. The complexity of each step of the proposed algorithm is summarized in Table~\ref{CT}.
	\begin{figure}[t]
	\begin{center}
		\includegraphics[width=0.97\columnwidth,keepaspectratio]%
		{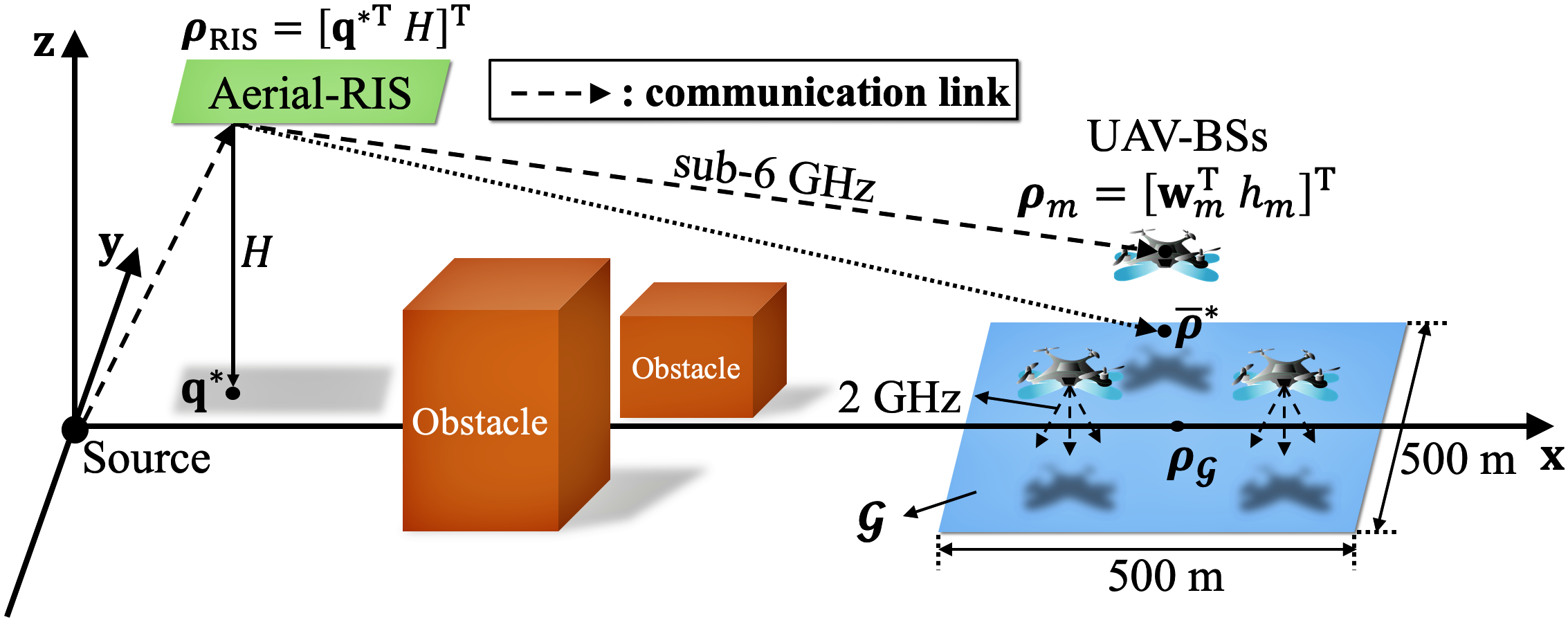}
		\caption{Simulated aerial-RIS setup comprising of a $N$-element RIS and $M_0$ UAV-BSs.}
		\label{fig_sim}
	\end{center}
\end{figure}
\section{Simulation Results}
\subsection{Simulation Setup}
For numerical analysis, we considered $10^3$ independent random user distributions and corresponding UAV-BSs~\cite{Noh} in $\mathcal{G}$ and assumed 2~GHz and sub-6~GHz channels for the fronthaul and the backhaul link, respectively~\cite{NR}. We assumed that the directional antennas installed on source are equipped with the pattern in~\cite{NR}. That is, the vertical and horizontal attenuations of the directional antenna are given by
\begin{equation}
\begin{split}
\label{antenna}
\begin{cases}
A_{\mathrm{v}} \left(\theta\right) = \min \left(12 \left( \frac{\theta-90^{\circ}}{\theta_{\mathrm{H}}} \right)^2 ,~\mathrm{SLA}_{\mathrm{v}} \right),\\
A_{\mathrm{h}} \left(\phi\right) = \min \left(12 \left( \frac{\phi}{\phi_{\mathrm{H}}} \right)^2 ,~A_{\max} \right),
\end{cases}
\end{split}
\end{equation}
where $\theta\in\left[0^{\circ}, 180^{\circ}\right]$ and $\phi\in\left[-180^{\circ}, 180^{\circ}\right)$ are the ranges of vertical and horizontal angle, respectively, $\theta_{\mathrm{H}}$ and $\phi_{\mathrm{H}}$ are vertical and horizontal HPBWs of the antenna pattern, respectively, $\mathrm{SLA}_{\mathrm{v}}$ is vertical side-lobe attenuation, and $A_{\max}$ is maximum attenuation. Hence, the antenna gain $G_{\mathrm{s}} \left(\theta, \phi\right)$ is obtained as		 
\begin{equation}
\label{ap}
G_{\mathrm{s}}\left(\theta, \phi\right)=G_{\max} -\min\left(A_{\mathrm{v}} \left(\theta\right)+A_{\mathrm{h}} \left(\phi\right) , A_{\max}\right),
\end{equation}
where $G_{\max}$ is a maximum directional gain.

For fair comparison, we considered the benchmark schemes with $(\mathbf{q}, \bar{\bm{\rho}})$ given by $\left(\frac{1}{2} \bm{\rho}_{\mathcal{G}}, \bar{\bm{\rho}}^*~\textrm{(full-array)}~\textrm{or}~\left\{\bar{\bm{\rho}}_i^*\right\}_{i=1}^L ~\textrm{(sub-array)} \right)$ and $\left(\mathbf{0}, {\bm{\rho}}_{\mathcal{G}}\right)$, respectively. The former benchmark assumes the same RIS-partition structure as our algorithm. We also considered the conventional terrestrial-based backhaul link where the ground source directly transmits the backhaul signal to the UAV-BSs based on the LoS probability and channel parameters given in~\cite{TAPUAV}.

Moreover, we analyzed the optimality of our algorithm based on exhaustive search. Since it is infeasible to search every possible $\left\{\bar{N}\right\}$, $\mathbf{q}$, and $\bm{\rho}$ that reach the global optimum of (\ref{object}), we conducted exhaustive subspace search with respect to: 1) full-array structure $\left(\bar{N}=N\right)$, 2) $\mathbf{q}\in\left\{\mathbf{q}\in\mathbb{R}^2:\left|\left|\mathbf{q}\right|\right|_2\le\delta\cdot\max_{m\in\mathbb{M}} \left|\left|\mathbf{w}_m\right|\right|_2\right\}\cap\left\{\left(x,y\right):x\ge0\right\}$ (based on~(\ref{num})), 3) $\bm{\rho}\in\mathcal{C}\cap\left\{\left(x, y, z\right): z\ge0\right\}$, where $\mathcal{C}$ is the smallest 3D cube for which every edge is parallel to x, y, or z-axis and contains every $\left\{\bm{\rho}_m\right\}_{m\in\mathbb{M}}$. Since adopting a full-array structure, placing aerial-RIS near the backhaul source, and setting the phase align point close to every UAV-BS lead to better performance, we believe that our subspace search indeed leads to a reliable suboptimal solution. The simulation environment based on the parameters is illustrated in Fig.~\ref{fig_sim}, and the detailed parameters are given in Table~\ref{SimPar}.
\begin{figure}[t]
	\begin{center}
		\includegraphics[width=0.97\columnwidth,keepaspectratio]%
		{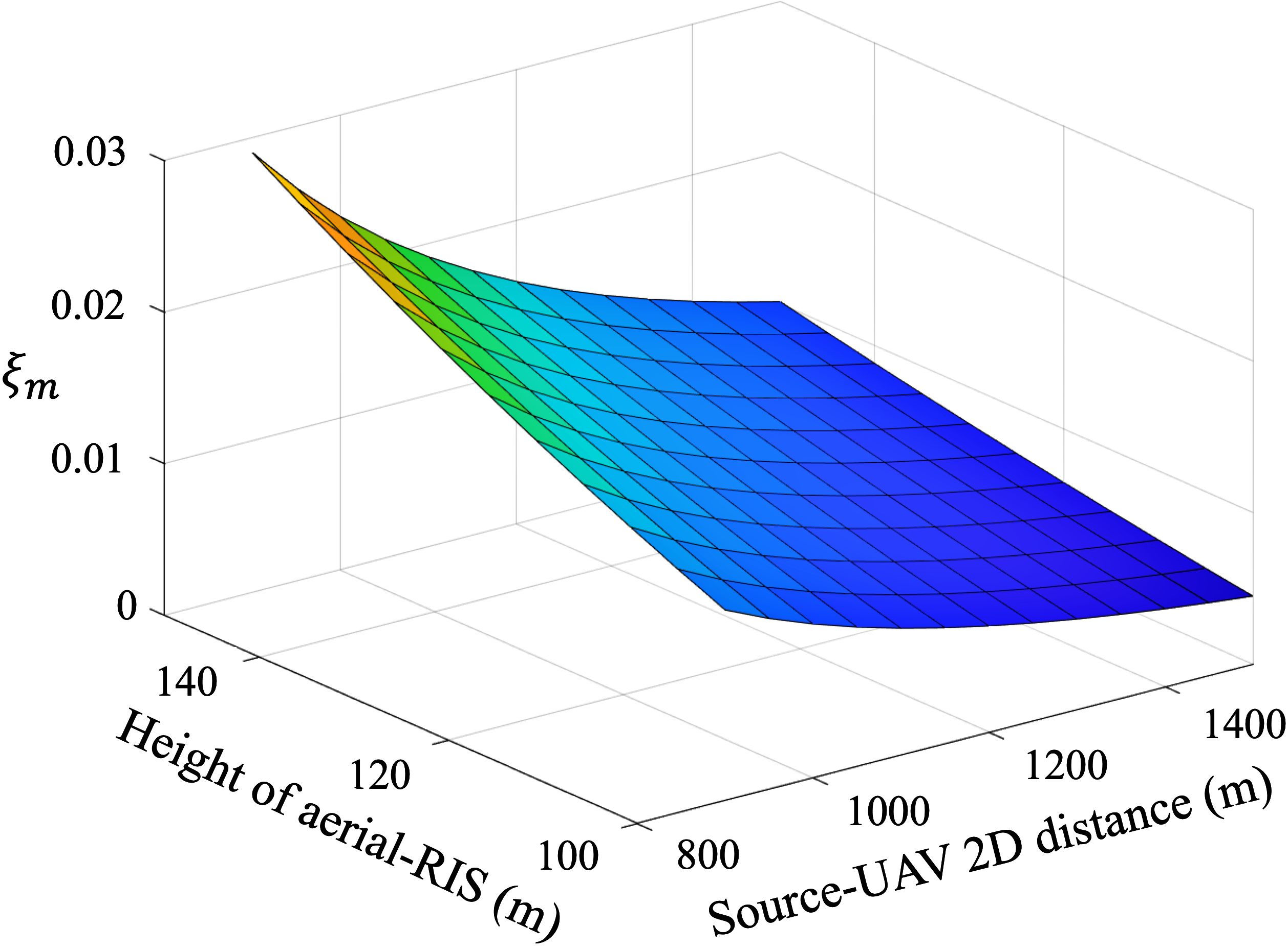}
		\caption{Behavior of $\xi_m$ with $h_m = 45~\textrm{m}$ with respect to the height of aerial-RIS and the source-UAV 2D distance.}
		\label{fig_cubic}
	\end{center}
\end{figure}
	\begin{figure}[t]
	\begin{center}
		\includegraphics[width=0.97\columnwidth,keepaspectratio]%
		{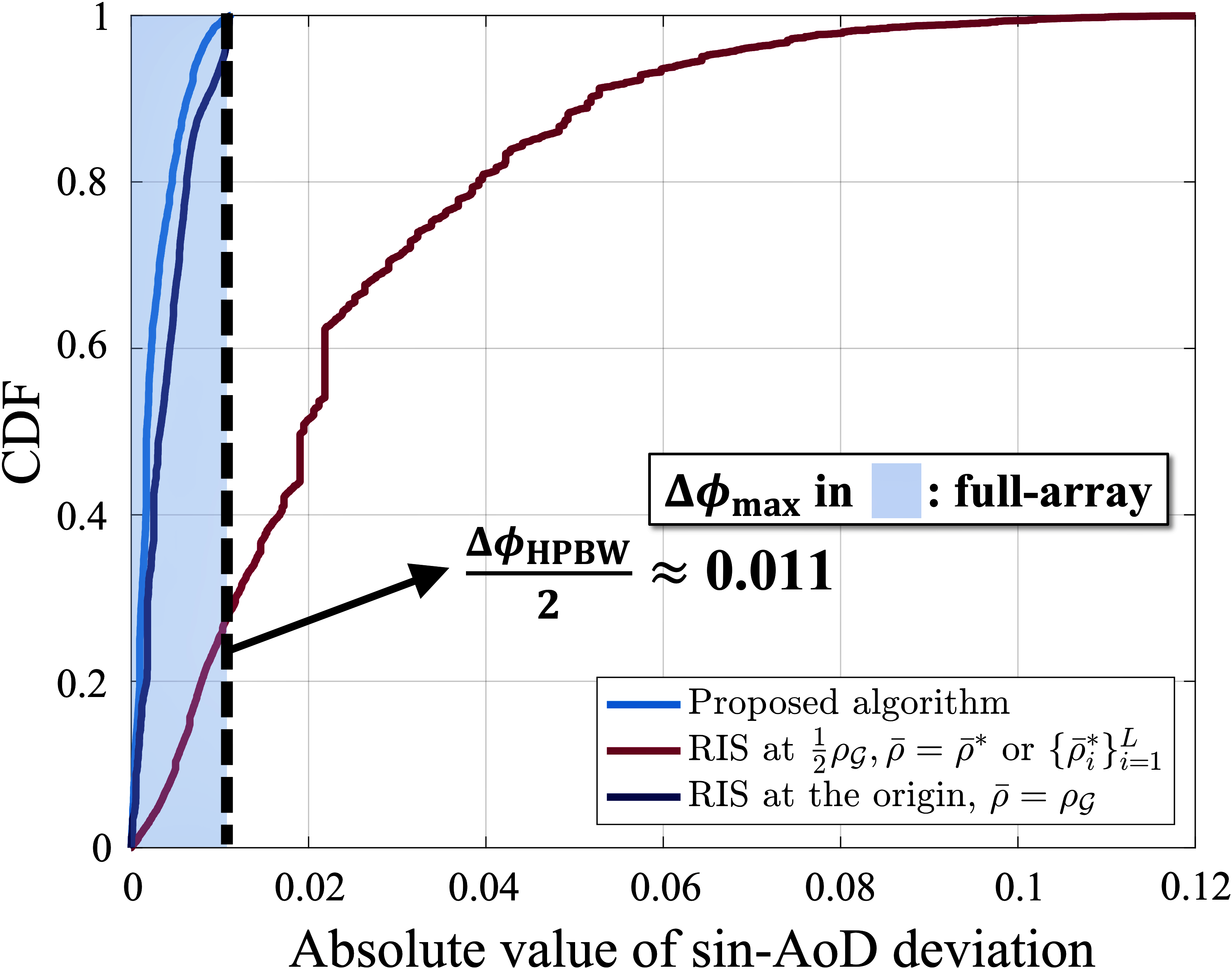}
		\caption{CDF of the absolute value of the sin-AoD deviations with the phase align point $\bm{\bar\rho}^*$ $\left(N=400\right)$.}
		\label{fig_sin}
	\end{center}
\end{figure}
	 \begin{center}
	\begin{table}[t]
	\centering 
		\caption{Empirical Probability of $\left|\Delta\phi_m (\bm{\bar\rho}^*)\right|> \frac{\Delta\phi_{\mathrm{HPBW}}}{2}$ in Fig.~\ref{fig_sin}}
		\begin{tabular}{|>{\centering} m{4.9cm} |>{\centering} m{3cm}|}
			\hline
			\textbf{Scenario} & \textbf{Probability}
			\tabularnewline
			\hline
			\centering	\textbf{Proposed aerial-RIS setup} & $\mathbf{7.50\times10^{-3}}$  \tabularnewline \hline
			\centering	RIS at $\frac{1}{2}\rho_{\mathcal{G}}, \bm{\bar\rho}=\bm{\bar\rho}^*~\textrm{or}~\left\{\bm{\bar\rho}_i^*\right\}_{i=1}^L$  & $7.22\times10^{-1}$   \tabularnewline \hline
			\centering	RIS at the origin, $\bm{\bar\rho}=\bm{\bar\rho}_{\mathcal{G}}$ & $2.24\times10^{-2} $  \tabularnewline \hline
		\end{tabular}
		\label{ProbT}
	\end{table}
\end{center}
\begin{figure*}[t]
	\begin{center}
		\includegraphics[width=1.99\columnwidth,keepaspectratio]%
		{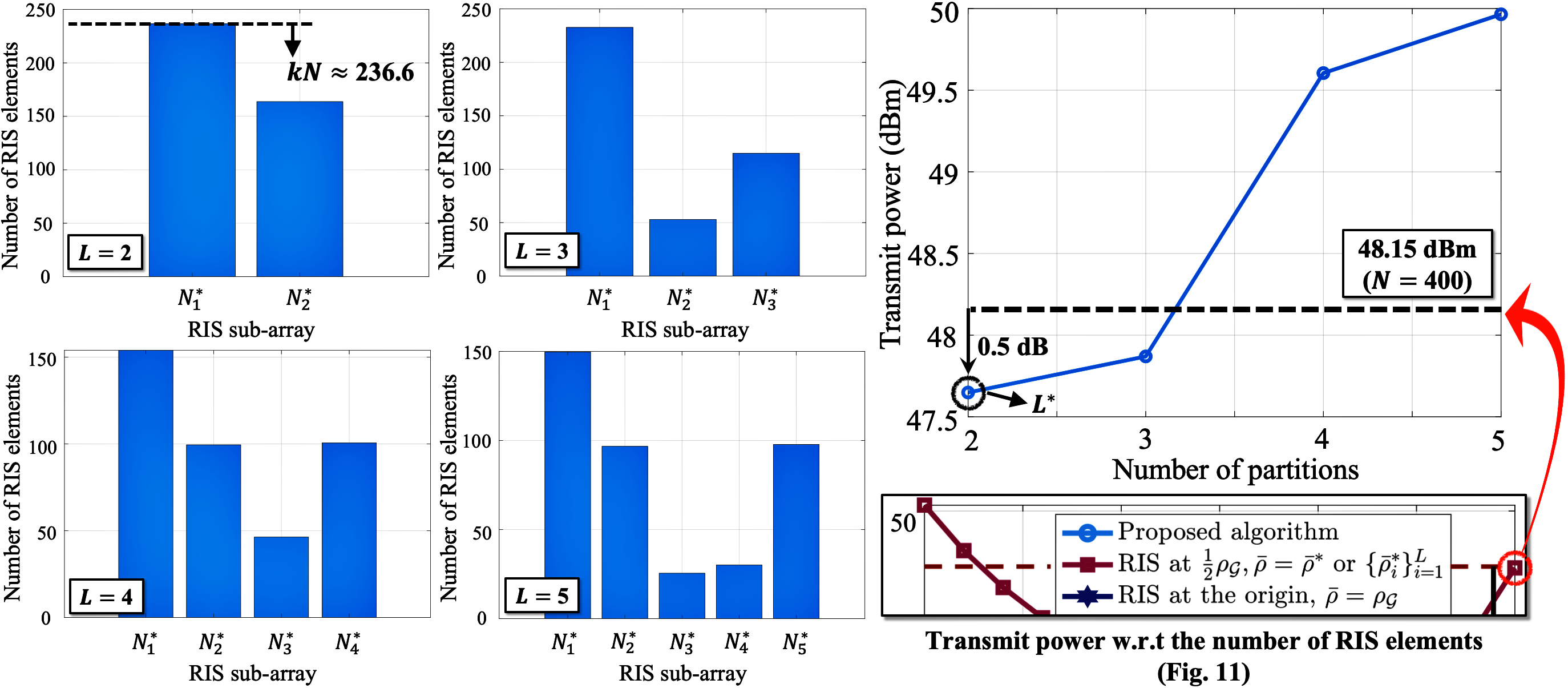}
		\caption{Array-partition result with $N=400$ and the corresponding transmit power for each number of partitions.}
		\label{figArray}
	\end{center}
\end{figure*}
\begin{figure}[t]
	\begin{center}
		\includegraphics[width=0.94\columnwidth,keepaspectratio]%
		{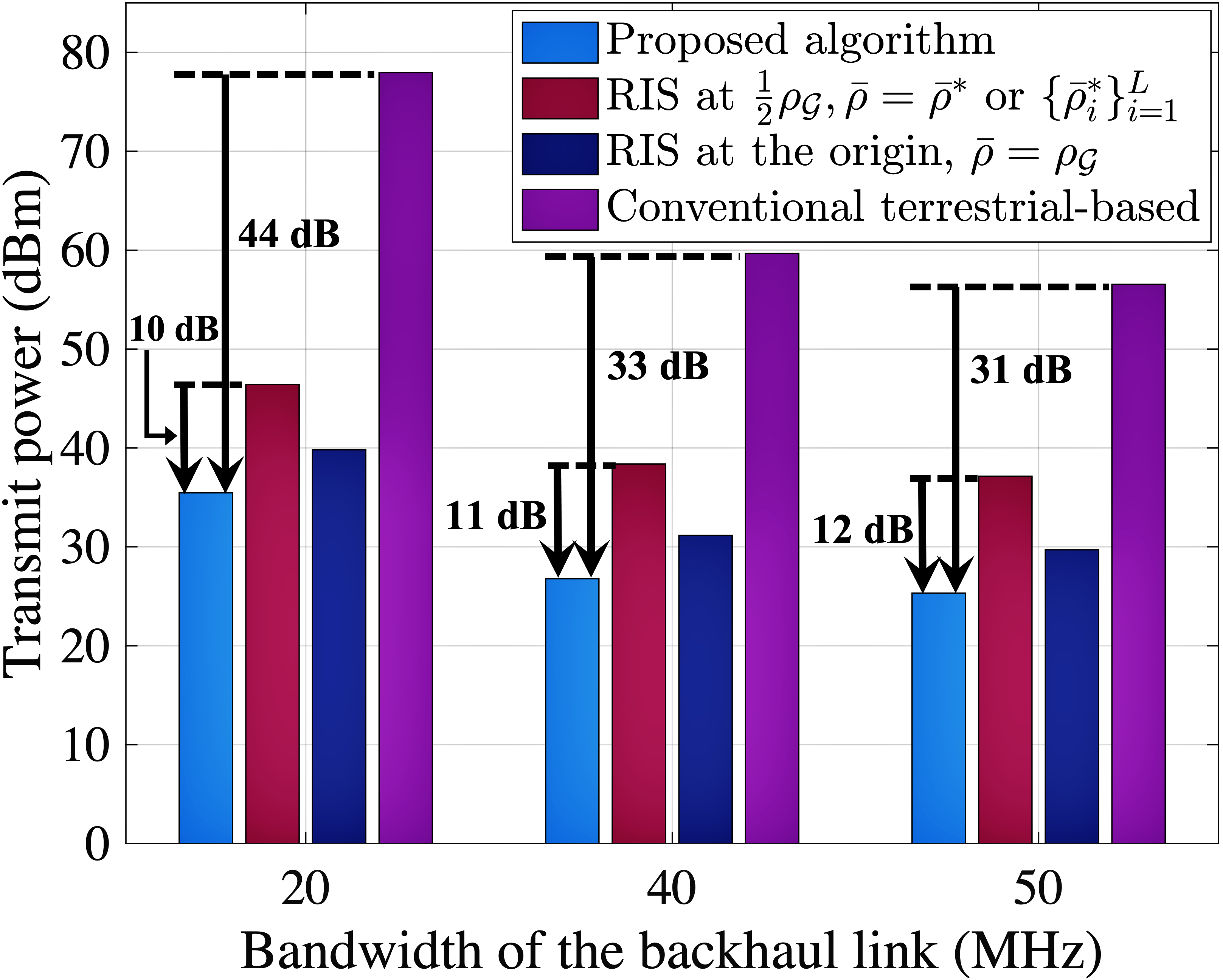}
		\caption{Transmit power with respect to the bandwidth of the backhaul link.}
		\label{figBW}
	\end{center}
\end{figure}
\subsection{Reliability of Aerial-RIS Placement and Phase Alignment}
Fig.~\ref{fig_cubic} shows the behavior of $\xi_m$ according to the height of aerial-RIS $\left(H\right)$ and the 2D distance of the UAV-BS from the source $\left(\left|\left|\mathbf{w}_m\right|\right|_2\right)$. It is shown that $\xi_m$ has an order of $10^{-2}$ for sufficiently large $\left|\left|\mathbf{w}_m\right|\right|_2$, which implies that $\mathbf{q}_m^*$ is extremely close to the origin compared to $\mathbf{w}_m$. Besides, $\xi_m$ increases according to $H$. It is because the increase of $H$ leads to larger $a$ and smaller $\sqrt{-\frac{a}{3}}~\left(a<0\right)$. Furthermore, because the posterior term of the $\sqrt{-\frac{a}{3}}$ in~(\ref{xixixixi}), which is also approximated by~$\left(-1+\frac{\epsilon}{3\sqrt{3}} \right)~\left(|\epsilon|\ll1\right)$ in~(\ref{taylor}), is negative, we can interpret that the decrease in the term $\sqrt{-\frac{a}{3}}$ significantly affects  the increase of $\xi_m$. It also coincides with \textit{Remark}~\ref{r1}, where $\xi_m$ increases and converges to $\frac{1}{2}$ when $H\rightarrow\frac{1}{2} \left|\left|\mathbf{w}_m\right|\right|_2$. Moreover, as $\xi_m \ll 1$, $\Delta\phi_m\left(\cdot\right)$ stays nearly the same as the $\mathbf{q}_m^*=\mathbf{0}$ case, where the numerically evaluated difference is given by less than $10^{-4}$, and thus negligible for the same environment. Nevertheless, the proposed method leads to energy-efficiency enhancement compared to the case of simply letting $\mathbf{q}_m^*=\mathbf{0}$. 

Fig.~\ref{fig_sin} illustrates the cumulative distribution function (CDF) of the absolute value of the sin-AoD deviations between the UAV-BS and the phase align point $\bm{\bar\rho}^*$ with $N=400$. We can apply the full-array RIS structure, which leads to the maximum energy-efficiency, when every absolute value of sin-AoD deviation is less than $\frac{\Delta\phi_{\mathrm{HPBW}}}{2}\approx \frac{1}{2} \frac{0.8858}{400*0.1}\approx0.011$. As shown from the CDF plot, the proposed algorithm outperforms every benchmark distributed in ``$\le\frac{\Delta\phi_{\mathrm{HPBW}}}{2}$" region colored in blue for almost 100\% of probability, where for the worst case of the benchmarks, only about 30\% of them are contained in the region. For more accurate analysis, we have also computed the empirical probability that the absolute value exceeds $\frac{\Delta\phi_{\mathrm{HPBW}}}{2}$ based on the CDF and organized in Table~\ref{ProbT}. As we have conducted the numerical simulation for $10^3$ independent user distributions with more than 8,000~UAV-BSs in total, from the result, we can conclude that the excess of the absolute value occurs for about 60~UAV-BSs by the proposed algorithm, which implies that the proposed algorithm guarantees the full-array scenario almost completely. As for the benchmarks, the value is larger than the proposed algorithm in every case. For the worst scenario, the probability of excess exceeds 0.5. Therefore, it is forced to apply the sub-array architecture and leads to the loss of energy-efficiency.

\subsection{Transmit Power Comparisons under Several Effects}
Fig.~\ref{figArray} illustrates the array-partition result $\left(\left\{N_i^*\right\}_{i=1}^L\right)$ with $N=400$ and the transmit power of the source corresponding to each number of partitions. We considered the number of partition $L$ by $L=2, \cdots, L_{\max}\left(=5\right)$ and compute the optimal array-partition for each $L$ by~(\ref{optimalssss}). We can find that the upper-bound of $\left\{N_i^*\right\}_{i=1}^L \left(kN\right)$ is meaningful at $L=2$ case, because the increase of $L$ leads to the rise of $k=L \frac{\Delta\phi_{\mathrm{HPBW}}/2}{\Delta\phi_{\max}}$ in~(\ref{constN}), reducing the significance of the upper-bound $kN$. Additionally, It is observed that the transmit power increases according to the number of array-partitions. The result agrees with our derived result in that, the array-partition leads to the decrease of the peak gain of $g$ and the increase of the transmit power. Moreover, by searching $L^*$, the proposed algorithm can achieve even less transmit power for approximately 0.5~dB compared to the benchmark with the same $N=400$~case, which is given by 48.15~dBm.

Fig.~\ref{figBW} shows the transmit power according to the bandwidth of the backhaul link $\left(B_{\mathrm{b}}\right)$. It is clear that when the bandwidth increases, by~(\ref{rate}), we can achieve the same backhaul rate $\left\{R_m\right\}_{m\in\mathbb{M}}$ with less transmit power, thereby increasing energy-efficiency. In fact, due to the efficient setup of the aerial-RIS by the proposed algorithm, the transmit power decreases further than the benchmarks, where the gain is up to 12~dB for every scenario with RIS. Moreover, the gain is up to 44~dB for terrestrial backhaul scenario without RIS, which suffers from extreme path loss due to high NLoS probability~\cite{a2gglobecom, TAPUAV} and leads to infeasible transmit power.

Fig.~\ref{figNum} compares the transmit power according to the number of RIS elements $\left(N\right)$. Here, we considered the upper-bound of $N$ due to the area limitation of the aerial platform and the high possibility of stability loss of the aerial platform that carries a vast number of RIS elements~\cite{traj}. 
As shown in the figure, the proposed method significantly reduces the transmit power compared to the benchmarks, where the gain is up to 10$\sim$13~dB for RIS-equipped cases and up to 32~dB for the terrestrial backhaul scenario, which suffers from high NLoS component. When $N$ goes significantly higher, in fact, the HPBW is reduced by the order of $N^{-1}$, which leads to the decrease of the probability of using the full-array structure and consequently causes the increase of a transmit power. Nevertheless, even for large $N$, our algorithm guarantees reliable transmit power, whereas for the benchmarks the power continues to increase. We can therefore conclude that for the feasible range of $N$ that we have considered, the performance of our algorithm monotonically increases with the number of RIS elements.

Fig.~\ref{figCenter} shows the transmit power according to the distance of $\bm{\rho}_{\mathcal{G}}$ from the source $\left(d_{\mathcal{G}}\right)$. It is clear that by applying the proposed algorithm, we can achieve an additional reduction of transmit power compared to the benchmarks. Even for low $d_{\mathcal{G}}$, our algorithm guarantees reliable transmit power, where for RIS-equipped benchmarks the power increases due to the larger possibility of using sub-array structure 
or the ``HPBW-outlier" by simply letting $\bm{\bar\rho} = \bm{\rho}_{\mathcal{G}}$.
\begin{figure}[t]
	\begin{center}
		\includegraphics[width=0.97\columnwidth,keepaspectratio]%
		{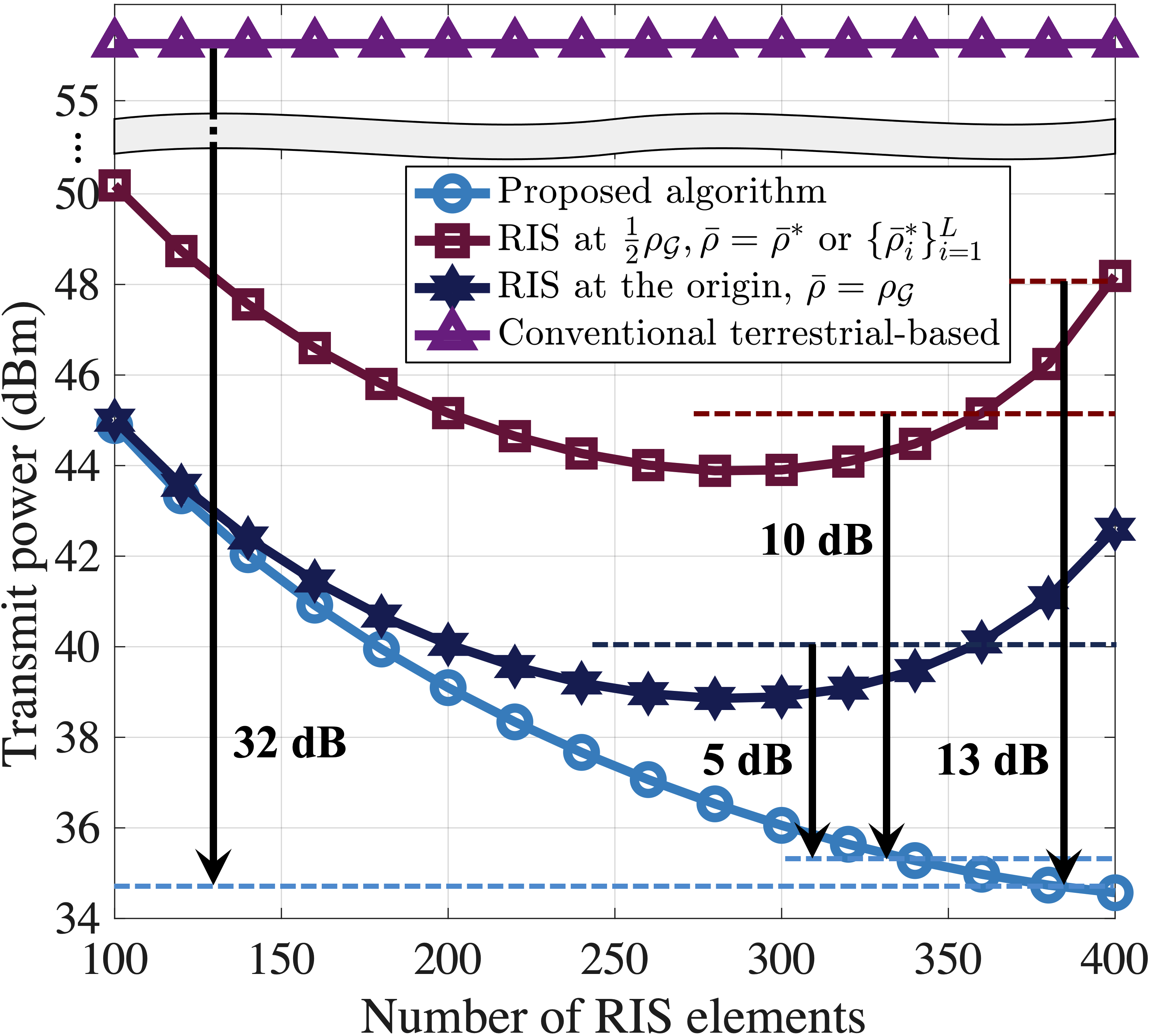}
		\caption{Transmit power with respect to the number of RIS elements.}
		\label{figNum}
	\end{center}
\end{figure}

Fig.~\ref{figheight} illustrates the transmit power according to feasible $H$ with the guarantee of high LoS probability~\cite{a2gglobecom, Noh}. Clearly, by applying our algorithm, we can reduce the transmit power by 6$\sim$7~dB compared to the RIS-equipped benchmark algorithms, and by about 23~dB compared to the terrestrial backhaul solution. Although the power increases for high $H$ owing to the increase of path loss, our algorithm can maintain a reliable level of transmit power comparable with the low-$H$ scenario. The power growth becomes larger for the benchmarks, resulting in lower energy-efficiencies. Hence, we can conclude that we should choose sufficiently low $H$ with the consideration of the height of skyscrapers in the urban area and the maintenance of the high probability of LoS links.

Fig.~\ref{figopt} shows a comparison between the solution obtained by the proposed algorithm and the suboptimal solution obtained from the exhaustive subspace search according to $d_{\mathcal{G}}$ and two different values of $H$. Evidently, owing to the alternative optimization and considered approximations, the proposed solution has a slight error with the suboptimal solution. However, the maximal optimality gap of the simulation is about 1$\%$, which is negligible. Therefore, we can deduce that there is no significant impact on the performance of energy-efficiency.

\begin{figure}[t]
	\begin{center}
		\includegraphics[width=0.91\columnwidth,keepaspectratio]%
		{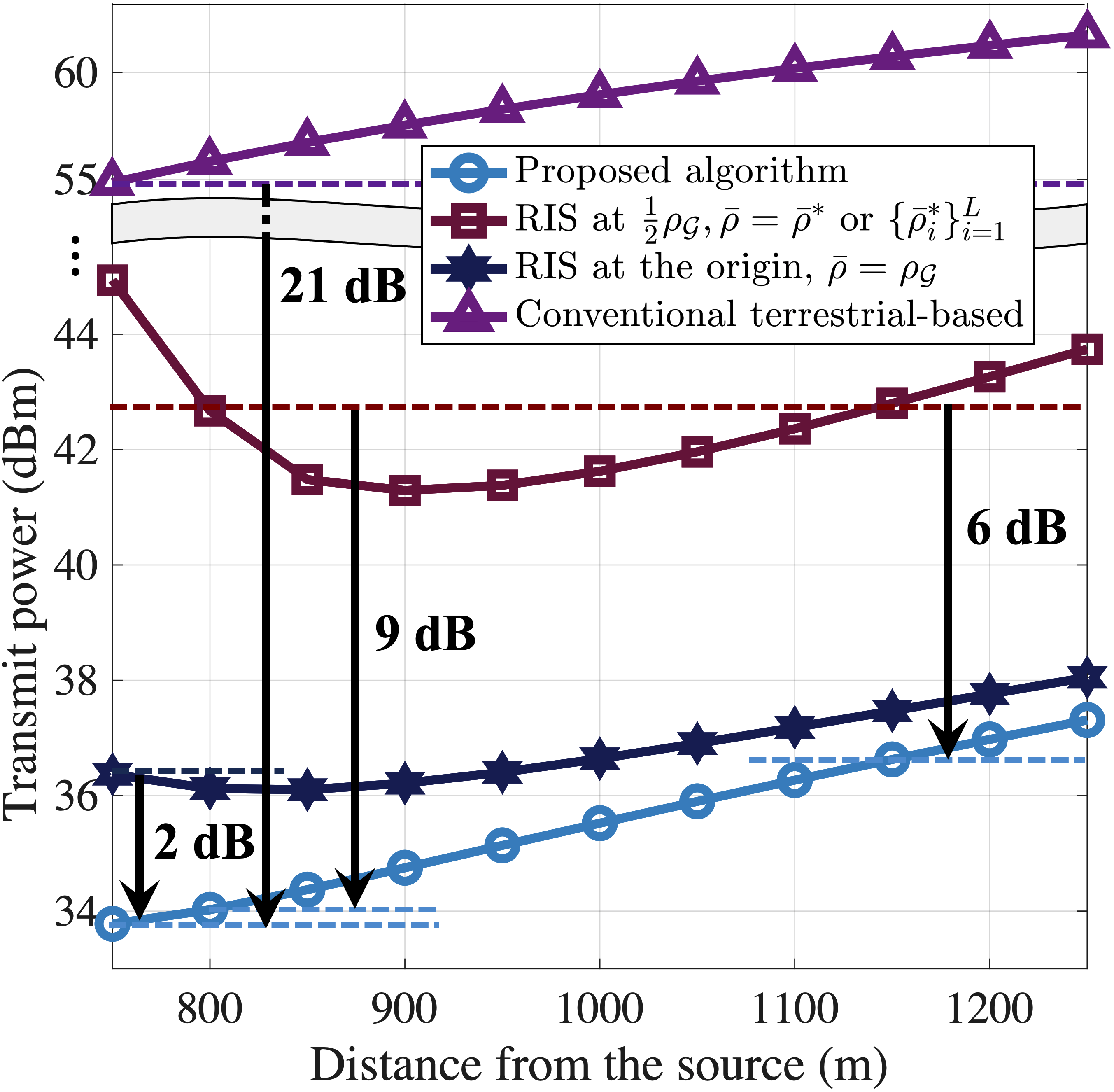}
		\caption{Transmit power with respect to the distance of $\bm{\rho}_{\mathcal{G}}$ from the source.}
		\label{figCenter}
	\end{center}
\end{figure}
\section{Conclusion}
In this paper, we have proposed an aerial-RIS backhaul architecture that can support UAV-BSs with a high energy-efficiency. We have derived the lower-bound on source transmit power for backhauling UAV-BSs for a given data rate and minimized the lower-bound. Through balancing the beamforming gain of aerial-RIS and the product of squared distances between multiple UAV-BSs by minimizing the sum of $\ell_2$-deviations and considering the proposed ``reverse" waterfilling array-partition, we determined the placement of aerial-RIS and the suboptimal phase align point/points that significantly enhance energy-efficiency. Additionally, the total computational complexity does not exceed the quadratic order, which also guarantees the efficiency of computation. Simulation results confirm that, compared to the benchmark schemes, the proposed method can achieve additional energy-efficiency by ensuring a source-nearby aerial-RIS placement and consequently a high probability of selecting the full-array RIS structure. We anticipate our analysis providing a deep insight for successfully integrating RIS with high-altitude platform and the performance analysis concerning several parameters, including the 3D placement and the phase align point/points of aerial-RIS.

For future work, our research can be broadened by extending the RIS structure, i.e., the UPA structure~\cite{UPARIS, XLRIS}, and considering the physical and hardware characteristics of the RIS dependent on the size and number of the RIS elements~\cite{meta, holo, LingRIS, TMH}. Furthermore, one can design the aerial-RIS trajectory to adapt to moving UAV-BSs under the mobility scenario~\cite{SRS, drl, HJFSO, FSOJSAC}. It is also essential to study other factors that affect the energy-efficiency, such as the battery consumption of the aerial platforms~\cite{traj, battery2}, assistance of the multiple RIS deployment including the terrestrial RIS~\cite{How, IRSmeetsUAV}, and the power dissipation in the RIS hardware~\cite{RISEE, DF}. Moreover, the work on the methodology of RIS installation on the aerial platform can be performed~\cite{UAVRISmag}, which is still in its infancy. Comparative research of our proposed aerial-RIS setup algorithm and other relaying methods, including amplify-and-forward (AF)~\cite{RISEE} and decode-and-forward (DF) architectures~\cite{vsrelay, DF, RISEE} can be taken into account. Furthermore, to confront the higher-data-rate fronthaul/backhaul links for the future wireless network~\cite{map, MS, kwon2016rf_LE, IAB, ygtwc, XR, Tact, HBFD, Squint}, the application of signals with a higher spectrum can be contemplated for our proposed aerial-RIS backhaul system~\cite{drl, RIStera, RISFSO}, followed by the software simulation and prototype measurement of its performance and reliability~\cite{MK,MS2,MKA}.	
		\begin{figure}[t]
	\begin{center}
		\includegraphics[width=0.93\columnwidth,keepaspectratio]%
		{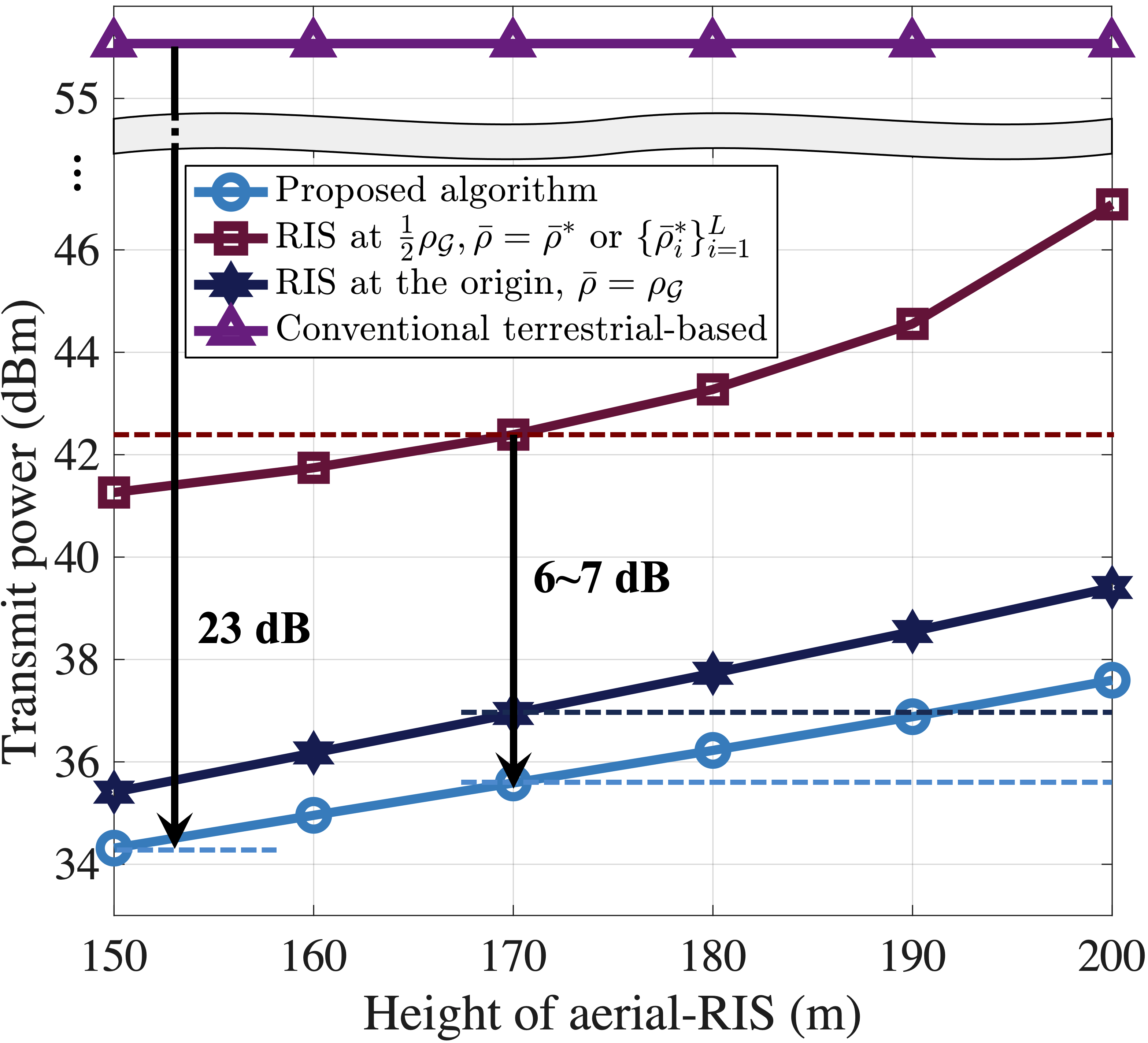}
		\caption{Transmit power with respect to the height of aerial-RIS.}
		\label{figheight}
	\end{center}
\end{figure}
\begin{figure}[t]
	\begin{center}
		\includegraphics[width=0.93\columnwidth,keepaspectratio]%
		{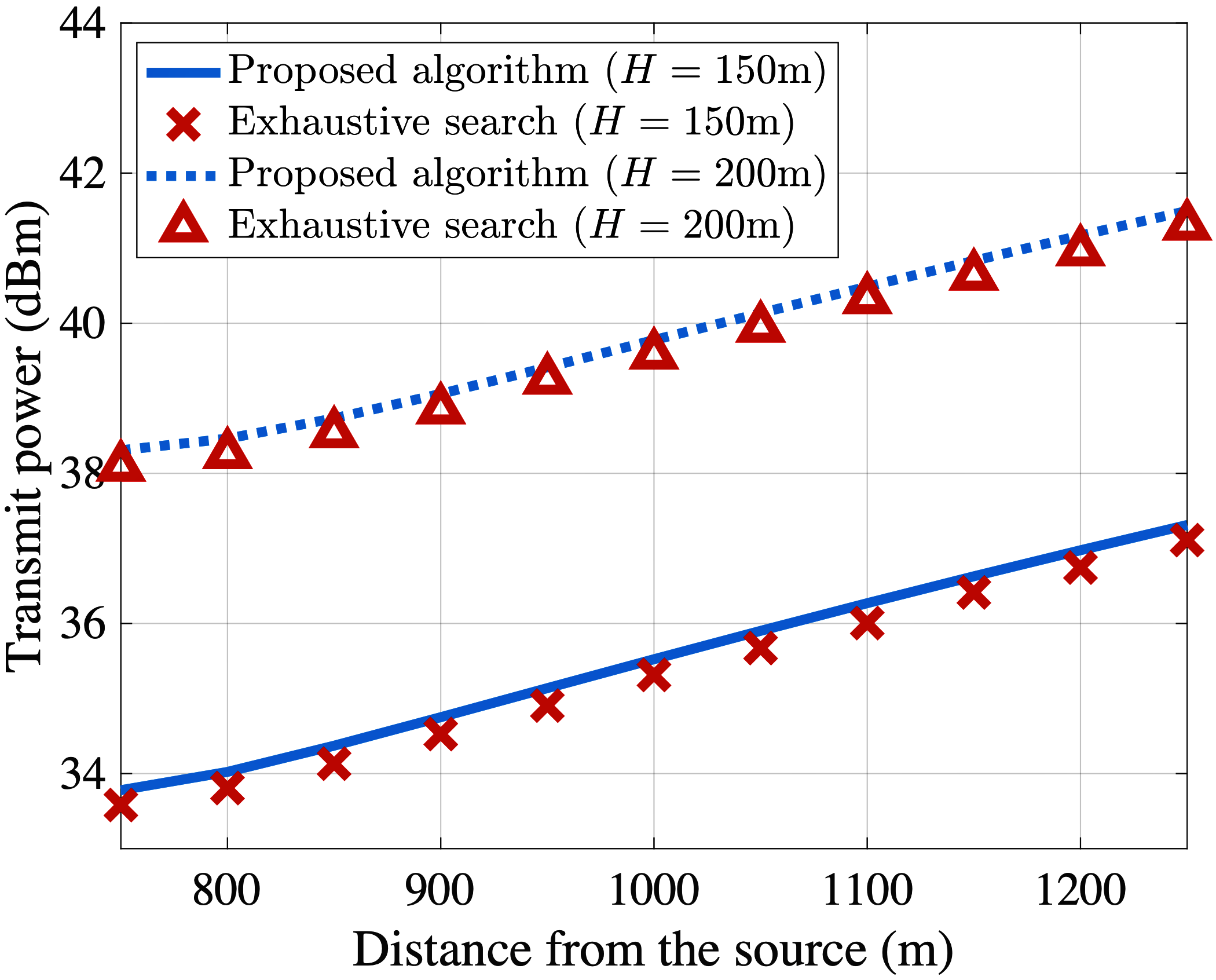}
		\caption{The optimality validation of the proposed solution with respect to $d_{\mathcal{G}}$ for $H=150, 200~\textrm{m}$.}
		\label{figopt}
	\end{center}
\end{figure}
	\section*{Acknowledgement}
	The authors would like to thank Prof. Robert W. Heath Jr., Dr. Min Soo Sim and Mr. Yonghwi Kim for their helpful discussions.
			\begin{figure}[t]
	\begin{center}
		\includegraphics[width=0.98\columnwidth,keepaspectratio]%
		{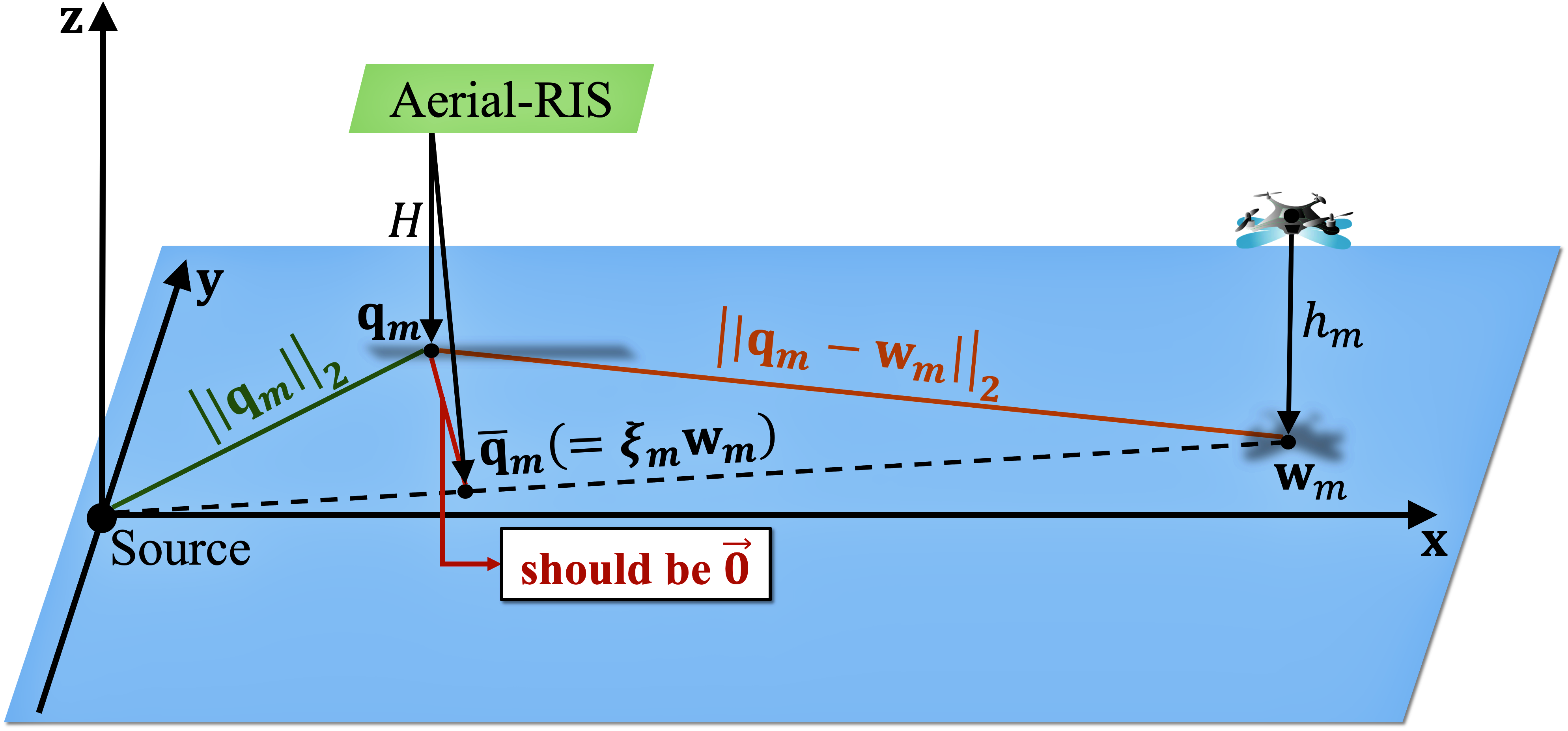}
		\caption{The orthogonal projection of $\mathbf{q}_m$ and illustration of~(\ref{projcond}) in Theorem~\ref{cubiceq}.}
		\label{figth2}
	\end{center}
\end{figure}
		\section*{Appendix A}
\section*{Proof of Theorem~\ref{cubiceq}}
	Let the orthogonal projection of $\mathbf{q}_m$ onto the line segment that links the source in the origin and $\mathbf{w}_m$ be $\mathbf{\bar{q}}_m$. Then, by the definition of the orthogonal projection, the following holds:
	\begin{equation}
	\begin{split}
	\label{projq}
	\begin{cases}
	\left|\left|\mathbf{q}_m - \mathbf{w}_m\right|\right|_2^2 = \left|\left|\mathbf{q}_m - \mathbf{\bar{q}}_m\right|\right|_2^2 + \left|\left|\mathbf{\bar{q}}_m - \mathbf{w}_m\right|\right|_2^2,\\
	\left|\left|\mathbf{q}_m \right|\right|_2^2 = \left|\left|\mathbf{q}_m - \mathbf{\bar{q}}_m\right|\right|_2^2 +\left|\left|\mathbf{\bar{q}}_m \right|\right|_2^2.
	\end{cases}
	\end{split}
	\end{equation}	
	Here, to minimize the objective function of~(\ref{num}), we need to minimize the left-hand side of both equations in~(\ref{projq}). Hence, we have to choose $\mathbf{q}_m$ that satisfies the following:
	\begin{equation}
	\label{projcond}
	\left|\left|\mathbf{q}_m - \mathbf{\bar{q}}_m\right|\right|_2^2 =0~\leftrightarrow~\mathbf{q}_m=\xi_m \mathbf{w}_m~\left(\xi_m>0\right),
	\end{equation}
which is clarified in Fig.~\ref{figth2}. Hence, the solution $\mathbf{q}_m^*$ of the problem is given by  $\mathbf{q}_m^*=\xi_m \mathbf{w}_m$, and by substituting we can denote the objective function by $f\left(\xi_m\right)$, that is,
	\begin{equation}
		\label{fxi}
		f\left(\xi_m\right)\triangleq \left|\left|\mathbf{w}_m\right|\right|_2^4 \left(\xi_m^2 + \zeta_1^2 \right) \left(\left(1-\xi_m\right)^2 + \zeta_2^2 \right),
	\end{equation}
where $\zeta_1=\frac{H}{\left|\left|\mathbf{w}_m\right|\right|_2}$ and $\zeta_2 = \frac{\left|H-h_m\right|}{\left|\left|\mathbf{w}_m\right|\right|_2}$. To find the minimum of $f\left(\xi_m\right)$, we should find the root of $f'\left(\xi_m\right)$. The discriminant $\Delta$ of the cubic equation $f'\left(\xi_m\right)=0$ is given by~\cite{Lovett}
\begin{equation}
	\label{disc}
	\Delta = \left(\frac{a}{3}\right)^3 + \left(\frac{b}{2}\right)^2,
\end{equation}
where $a=\frac{1}{2} \left(\zeta_1^2 + \zeta_2^2 \right) - \frac{1}{4}$ and $b=\frac{1}{4} \left(\zeta_2^2 - \zeta_1^2 \right)$. Because we assume that $d_{\mathcal{G}}$ is sufficiently large, it is reasonable to assume that $\left|\left|\mathbf{w}_m\right|\right|_2$ follows the same scale, which leads to $a<0\left(\approx-\frac{1}{4}\right)$ and $|b|\ll1$. In this case, $\Delta<0$ holds and the three real solutions $\left\{\xi_{m,k}\right\}_{k=0}^2$ of $f'\left(\xi_m\right)=0$ are given by~\cite{Lovett}
	\begin{equation}
		\begin{split}
		\label{solcubic}
		\xi_{m,k}=\frac{1}{2}+2\sqrt{-\frac{a}{3}}\cos \left( \frac{1}{3} \cos^{-1} \left( \frac{3b}{2a} \sqrt {-\frac{3}{a}} \right)-\frac{2}{3}\pi k \right)&\\
		\left(k=0,1,2\right)&.
	\end{split}
	\end{equation}
	As $a\approx-\frac{1}{4}$ and $|b|\ll 1$, by letting $\frac{3b}{2a} \sqrt{-\frac{3}{a}} = \epsilon ~\left(|\epsilon|\ll 1\right)$ and successively applying the first-order Taylor approximation to~(\ref{solcubic}), it becomes
	\begin{equation}
	\begin{split}
	\label{taylor}
	\begin{cases}
	\xi_{m,0} \approx \frac{1}{2}+\sqrt{-a} \left(1+ \frac{\epsilon}{3\sqrt{3}} \right),\\
	\xi_{m,1} \approx \frac{1}{2}+\sqrt{-a} \left(- \frac{\epsilon}{3\sqrt{3}} \right),\\
	\xi_{m,2} \approx \frac{1}{2}+\sqrt{-a} \left(-1+ \frac{\epsilon}{3\sqrt{3}} \right).
	\end{cases}
	\end{split}
	\end{equation}
	Moreover, by substituting $a=\frac{1}{2} \left(\zeta_1^2 + \zeta_2^2\right) - \frac{1}{4}$ into~(\ref{taylor}) we~obtain
		\begin{equation}
	\begin{split}
	\label{taylor2}
	\begin{cases}
	\xi_{m,0} \approx \frac{1}{2}-\sqrt{\frac{1}{4}-\frac{1}{2} \left(\zeta_1^2 + \zeta_2^2 \right)} \left(-1- \frac{\epsilon}{3\sqrt{3}} \right),\\
	\xi_{m,1} \approx \frac{1}{2}-\sqrt{\frac{1}{4}-\frac{1}{2} \left(\zeta_1^2 + \zeta_2^2 \right)} \left( \frac{\epsilon}{3\sqrt{3}} \right),\\
	\xi_{m,2} \approx \frac{1}{2}-\sqrt{\frac{1}{4}-\frac{1}{2} \left(\zeta_1^2 + \zeta_2^2 \right)} \left(1- \frac{\epsilon}{3\sqrt{3}} \right).\\
	\end{cases}
	\end{split}
	\end{equation}	
	From~(\ref{taylor2}), we can obtain the following outcomes:
	 \begin{enumerate}
 \item It is clear that $\xi_{m,0}>\xi_{m,1}>\xi_{m,2}$ for small $|\epsilon|$. Owing to the characteristics of the quartic equation~\cite{Lovett}, $f$ achieves its local minimum at $\xi_{m,0}$ and $\xi_{m,2}$, and either of two achieves the global minimum.
 \item Since $|\epsilon|\ll1$, the following holds for $\xi_{m,2}$:
 \begin{equation}
 \label{taylor3}
 \xi_{m,2}\approx \frac{1}{2}-\sqrt{\frac{1}{4} -\frac{1}{2} \left(\zeta_1^2 + \zeta_2^2 \right)} \left(1\right) =\frac{\frac{1}{2} \left(\zeta_1^2 + \zeta_2^2 \right)}{\sqrt{\frac{1}{4} + \frac{1}{2} \left(\zeta_1^2 + \zeta_2^2 \right)}},
 \end{equation}
which implies that $\xi_{m,2}$ is greater than 0 for small $|\epsilon|$. Moreover, since $\zeta_1, \zeta_2$ is sufficiently small ($\because\left|\left|\mathbf{w}_m\right|\right|_2$ is sufficiently large), by~(\ref{taylor3}) it can be deduced that $\xi_{m,2}$ is close to the origin. 
 \end{enumerate}
From above results, we should choose $\xi_m$ as
	\begin{equation}
	\label{cubicres}
	\xi_m\triangleq\xi_{m,2}=\frac{1}{2}+2\sqrt{-\frac{a}{3}}\cos \left( \frac{1}{3} \cos^{-1} \left( \frac{3b}{2a} \sqrt {-\frac{3}{a}} \right)-\frac{4}{3}\pi  \right),
	\end{equation}
	and determine $\mathbf{q}_m^*=\xi_m\mathbf{w}_m$, which is the result of the~theorem.~$\blacksquare$
			\section*{Appendix B}
\section*{Proof of Theorem~\ref{KKT}}
The Lagrangian $\mathcal{L}$ of Problem~(\ref{water}) with dual variable $\left[\left\{\lambda_{1i}\right\}_{i=1}^L, \left\{\lambda_{2i}\right\}_{i=1}^L, \mu\right]$ is given by~\cite{boyd}
\begin{equation}
\begin{split}
\label{LagN}
&\mathcal{L}\left(\left\{\lambda_{1i}\right\}_{i=1}^L, \left\{\lambda_{2i}\right\}_{i=1}^L, \mu, \left\{N_i \right\}_{i=1}^L\right)\\&=\sum_{i=1}^L\sum_{m\in\mathbb{M}_i} \frac{A_m}{N_{i}^2} + \sum_{i=1}^L \lambda_{1i} \left(N_i - kN\right) \\&~~~+ \sum_{i=1}^L \lambda_{2i} \left(-N_i \right) + \mu \left(\sum_{i=1}^L N_i -N \right).
\end{split}
\end{equation}
Hence, by differentiating $\mathcal{L}$ into $N_i$ and applying the Karush-Kuhn-Tucker (KKT) conditions~\cite{boyd} we obtain
\begin{equation}
\begin{split}
\label{diffL}
\begin{cases}
\frac{\partial \mathcal{L}}{\partial N_i} = \sum_{m \in\mathbb{M}_i} \frac{-2A_m}{N_i^3} +  \left(\lambda_{1i} -\lambda_{2i}\right)+\mu =0~\left(\star\right),\\
\lambda_{1i}\ge0, \lambda_{2i} \ge0, \lambda_{1i} \left(N_i - kN\right)=0 ,  \lambda_{2i} N_i =0.
\end{cases}
\end{split}
\end{equation}
However, since every RIS partition, in fact, should have a positive number of elements $\left(N_i\neq0\right)$, $\lambda_{2i} =0$ holds. For the case of $\lambda_{1i}=0$, $\left(\star\right)$ in (\ref{diffL}) is equivalent to
\begin{equation}
\begin{split}
\label{muu}
 &\frac{\partial \mathcal{L}}{\partial N_i}=\sum_{m \in\mathbb{M}_i} \frac{-2A_m}{N_i^3} +\mu =0\\
 &~\rightarrow N_i = \sqrt[3]{\sum_{m\in\mathbb{M}_i} {\frac{2A_m}{\mu}}}\le kN.
\end{split}
\end{equation}
However, for the case of $N_i-kN=0$, by substituting~$N_i=kN$ into $\left(\star\right)$ we obtain
\begin{equation}
\begin{split}
\label{bigsmall}
&\mu + \sum_{m\in\mathbb{M}_i} \frac{-2A_m}{\left(kN\right)^3 } = -\lambda_{1i} \le0\\
&~\rightarrow N_i=kN\le\sqrt[3]{\sum_{m\in\mathbb{M}_i} {\frac{2A_m}{\mu}}}.
\end{split}
\end{equation}
Hence, from~(\ref{muu}) and~(\ref{bigsmall}), the optimal $\left\{N_i^*\right\}_{i=1}^L$ is given by
\begin{equation}
\label{optimalN}
 N_i^*=\min\left(kN, \sqrt[3]{\frac{\sum_{m\in\mathbb{M}_i} 2A_m}{\mu}}\right) \left(i=1,\cdots,L\right),
 \end{equation}
 where $\mu$ is determined so that $\sum_{\ell=1}^L N_\ell^* =N$ holds and the theorem follows.~$\blacksquare$	
		
		\bibliographystyle{IEEEtran}
	     \bibliography{RIS_ref}	

\begin{IEEEbiography}
    [{\includegraphics[width=1in,height=1.25in,clip,keepaspectratio]{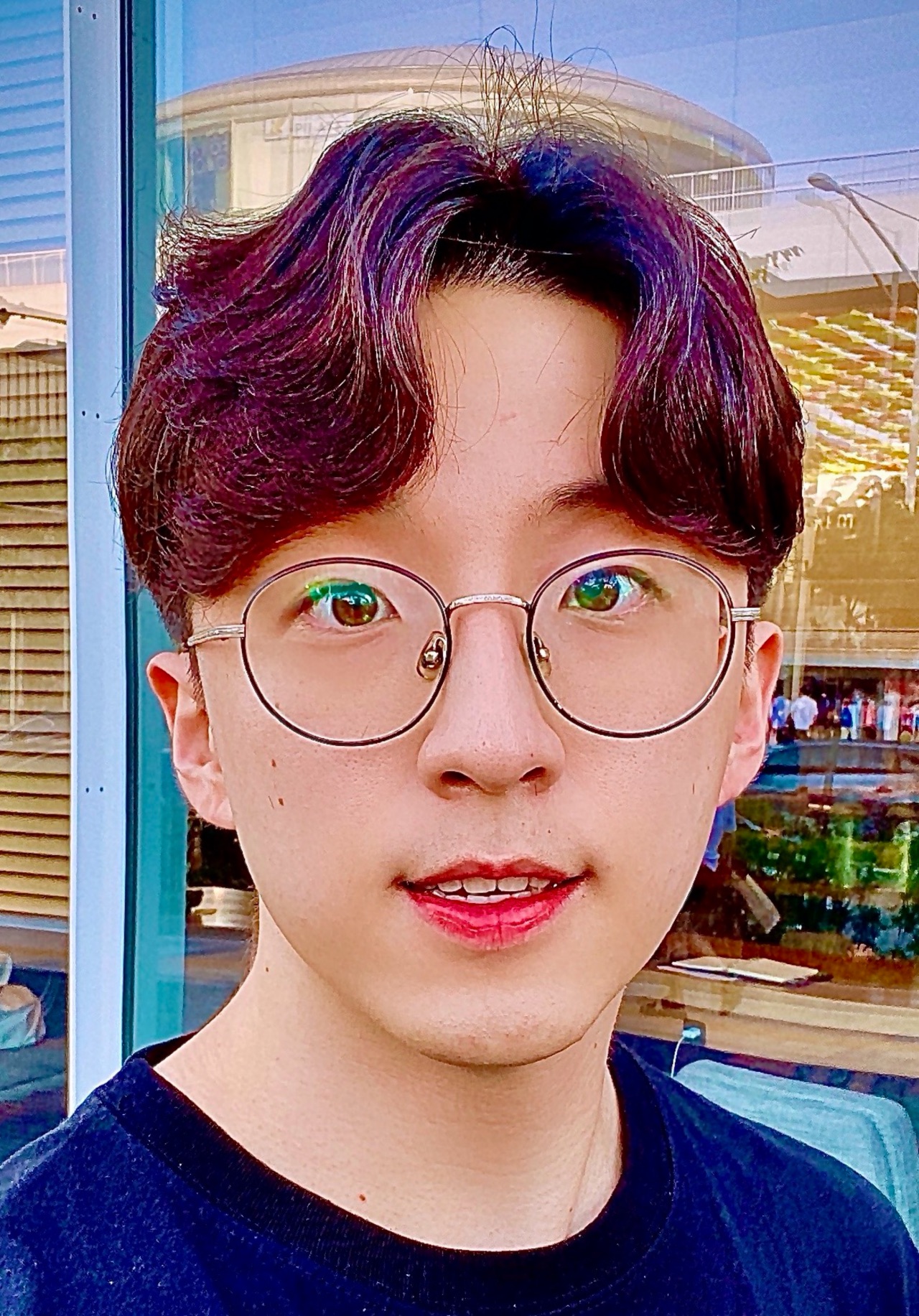}}]{Hong-Bae Jeon}
(S'19) received the B.S degree in electrical and electronic engineering and mathematics from Yonsei University, South Korea, in 2017. He is currently pursuing the Ph.D degree in the School of Integrated Technology, Yonsei University, South Korea. His research interests include applied mathematics and emerging technologies for 5G/B5G communications, such as reconfigurable intelligent surface, UAV network, radio resource management, and free-space optical communications.
\end{IEEEbiography}	
\begin{IEEEbiography}
    [{\includegraphics[width=1in,height=1.25in,clip,keepaspectratio]{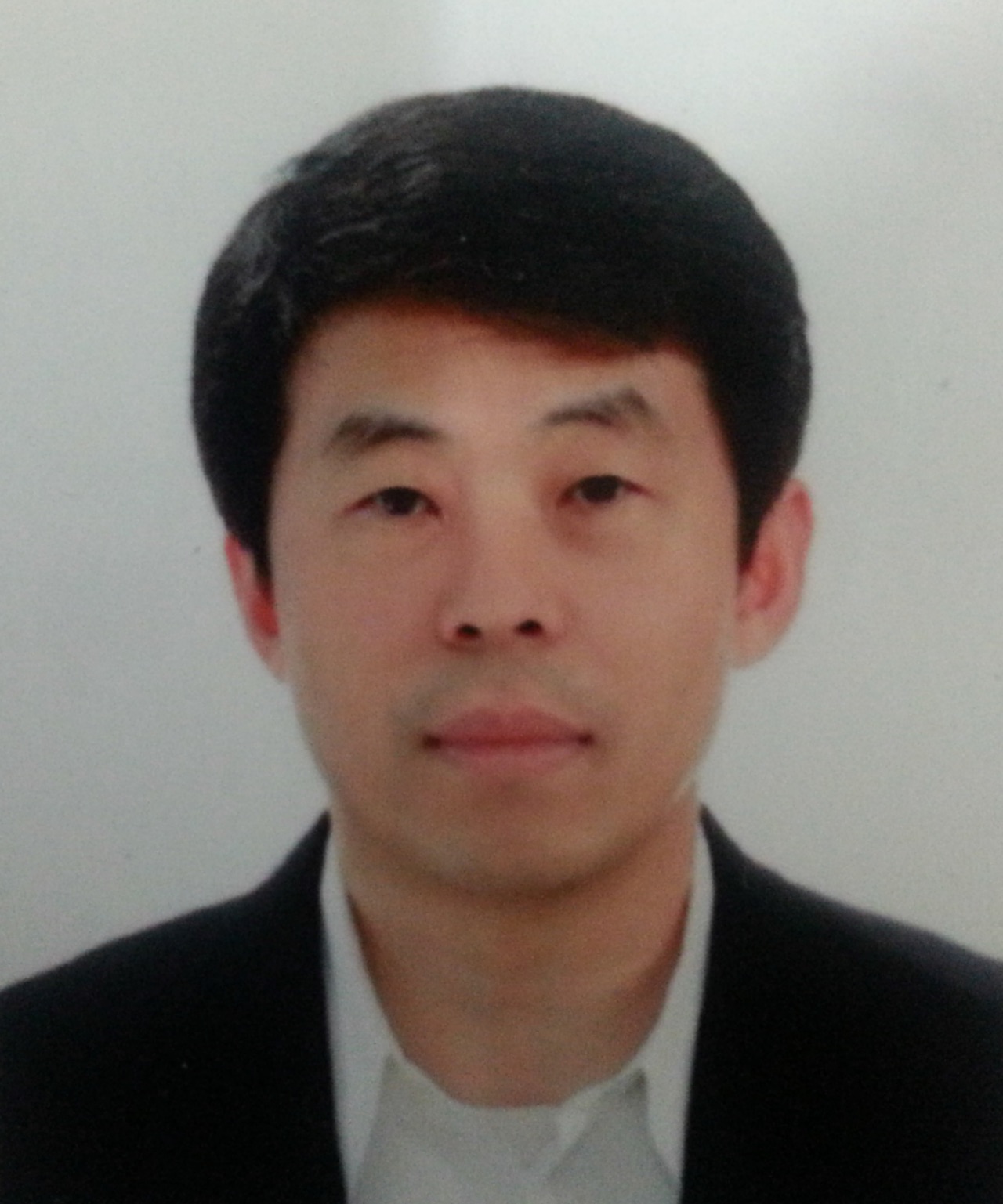}}]{Sung-Ho Park}
(M'87) received the B.S. degree in mathematics from Korea University, Seoul, South Korea, in 1984. From 1985 to 2005, he was working in a software company, dealing with word-processor, middle-ware, and software for financial industry. Since 2005, he has been developing spectrum management and analysis software for government and military. He is currently developing spectrum analysis software for a military operation.
\end{IEEEbiography}	
	\begin{IEEEbiography}
    [{\includegraphics[width=1in,height=1.25in,clip,keepaspectratio]{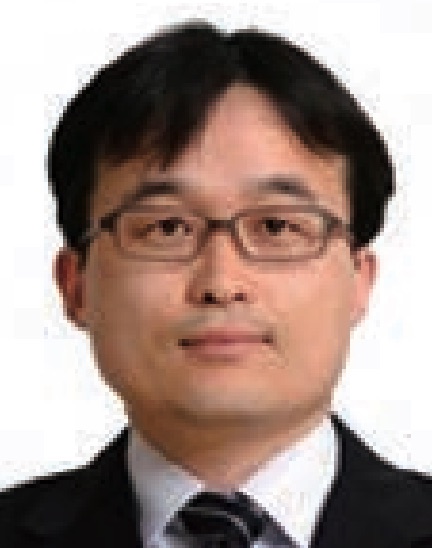}}]{Jaedon Park}
received the B.S. degree in electronics engineering from Hanyang University, Seoul, South Korea, in 2000, and the M.S. and Ph.D. degrees from the School of Electrical Engineering, Korea Advanced Institute of Science and Technology (KAIST), Daejeon, South Korea, in 2002 and 2016, respectively. He is currently a Senior Researcher with the Agency for Defense Development (ADD), Daejeon. His research interests include MIMO systems, relay systems, and FSO systems.
\end{IEEEbiography}	
	\begin{IEEEbiography}
    [{\includegraphics[width=1in,height=1.25in,clip,keepaspectratio]{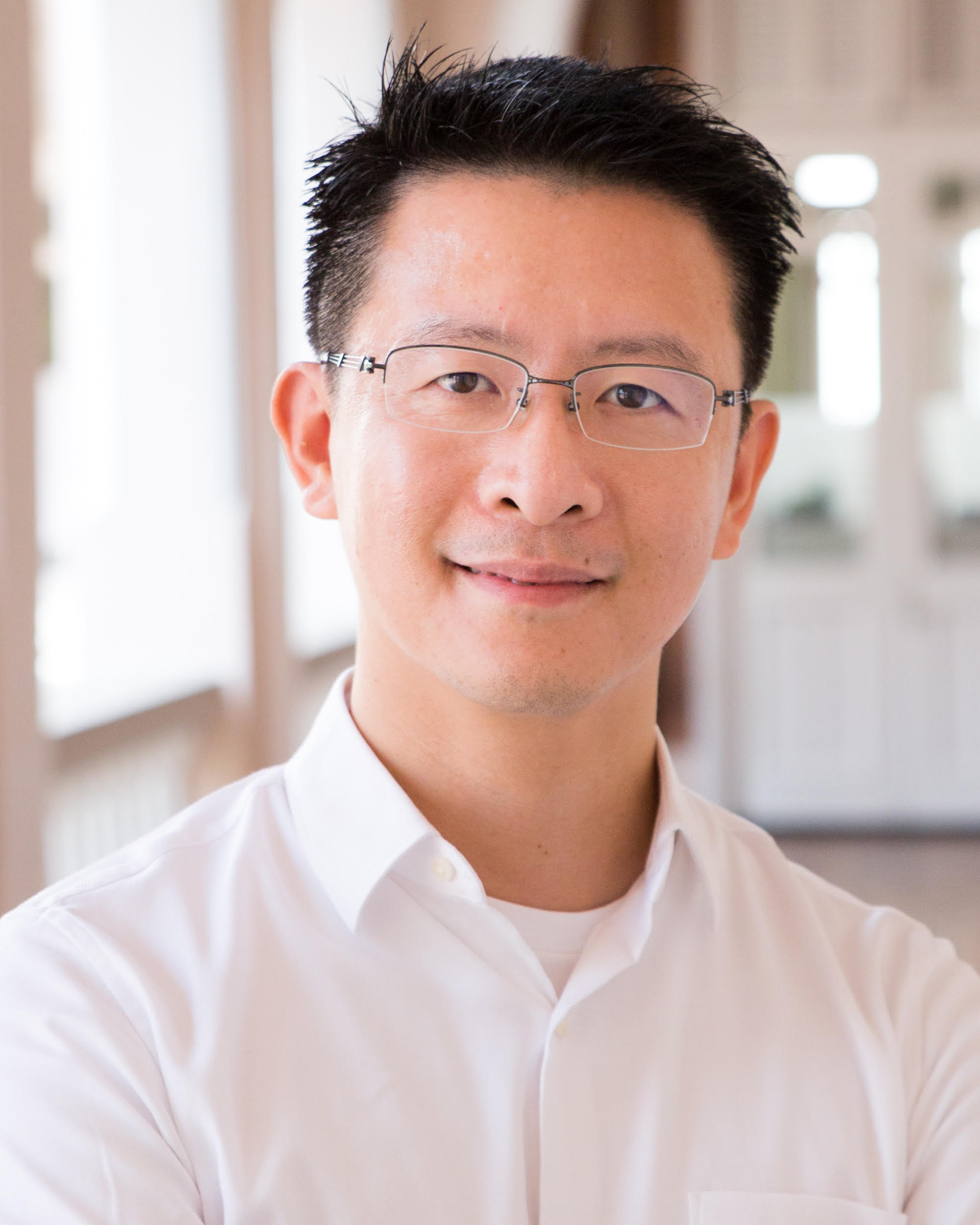}}]{Kaibin Huang}
(M'08-SM'13-F'21) received the B.Eng. and M.Eng. degrees from the National University of Singapore and the Ph.D. degree from The University of Texas at Austin, all in electrical engineering. He is currently an Associate Professor with the Department of Electrical and Electronic Engineering, The University of Hong Kong, Hong Kong. He received the IEEE Communication Society’s 2021 Best Survey Paper, 2019 Best Tutorial Paper, 2019 Asia–Pacific Outstanding Paper, 2015 Asia–Pacific Best Paper Award, and the best paper awards at IEEE GLOBECOM 2006 and IEEE/CIC ICCC 2018. He received the Outstanding Teaching Award from Yonsei University, South Korea, in 2011. He has been named as a Highly Cited Researcher by the Clarivate Analytics in 2019 and 2020. He served as the Lead Chair for the Wireless Communications Symposium of IEEE GLOBECOM 2017 and the Communication Theory Symposium of IEEE GLOBECOM 2014, and the TPC Co-chair for IEEE PIMRC 2017 and IEEE CTW 2013. He is also an Executive Editor of IEEE TRANSACTIONS ON WIRELESS COMMUNICATIONS, an Associate Editor of IEEE JOURNAL ON SELECTED AREAS IN COMMUNICATIONS, and an Area Editor for IEEE TRANSACTIONS ON GREEN COMMUNICATIONS AND NETWORKING. Previously, he served on the Editorial Board for IEEE WIRELESS COMMUNICATIONS LETTERS. He has guest edited special issues of IEEE JOURNAL ON SELECTED AREAS IN COMMUNICATIONS, IEEE JOURNAL OF SELECTED TOPICS IN SIGNAL PROCESSING, and IEEE COMMUNICATIONS MAGAZINE. He is also a Distinguished Lecturer of the IEEE Communications Society and the IEEE Vehicular Technology Society, and a Research Fellow of Hong Kong Research Grants Council.
\end{IEEEbiography}
	\begin{IEEEbiography}
    [{\includegraphics[width=1in,height=1.25in,clip,keepaspectratio]{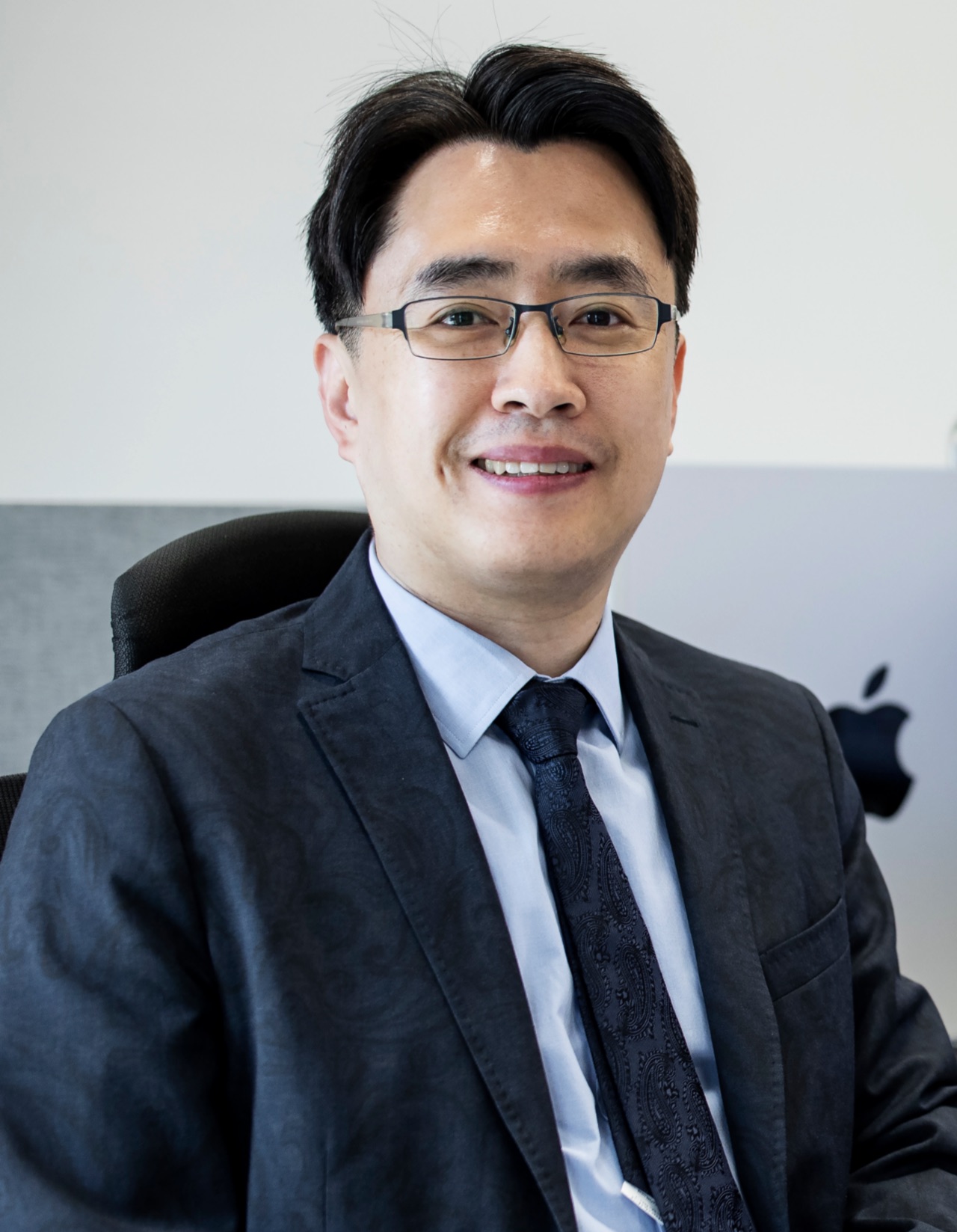}}]{Chan-Byoung Chae}
(S'06-M'09-SM'12-F'21) received the Ph.D. degree in electrical and computer engineering from The University of Texas (UT) at Austin, in 2008. From 2001 to 2005, he was a Research Engineer with the Telecommunications Research and Development Center, Samsung Electronics, Suwon, South Korea. From 2008 to 2009, he was a Postdoctoral Research Fellow with Harvard University, Cambridge, MA, USA. From 2009 to 2011, he was MTS with Bell Labs, Alcatel-Lucent, Murray Hill, NJ, USA. He was a member of the Wireless Networking and Communications Group (WNCG) at UT. He is currently an Underwood Distinguished Professor with the School of Integrated Technology, Yonsei University, South Korea. 

Dr. Chae was a recipient/co-recipient of the IEEE WCNC Best Demo Award in 2020, the Young Engineer Award from the National Academy of Engineering of Korea (NAEK) in 2019, the IEEE DySPAN Best Demo Award, in 2018, the IEEE/KICS JOURNAL OF COMMUNICATIONS AND NETWORKS Best Paper Award in 2018, the Award of Excellence in Leadership of 100 Leading Core Technologies for Korea 2025 from the NAEK in 2017, the Yonam Research Award from the LG Yonam Foundation, in 2016, the IEEE INFOCOM Best Demo Award in 2015, the IEIE/IEEE Joint Award for Young IT Engineer of the Year in 2014, the KICS Haedong Young Scholar Award in 2013, the IEEE Signal Processing Magazine Best Paper Award in 2013, the IEEE ComSoc AP Outstanding Young Researcher Award, in 2012, the IEEE VTS Dan. E. Noble Fellowship Award, in 2008. He is currently the Editor-in-Chief of the IEEE TRANSACTIONS ON MOLECULAR, BIOLOGICAL, AND MULTI-SCALE COMMUNICATIONS and a Senior Editor of the IEEE WIRELESS COMMUNICATIONS LETTERS. He has been serving as an Editor for the IEEE COMMUNICATIONS MAGAZINE, since 2016 and  has worked with the IEEE TRANSACTIONS ON WIRELESS COMMUNICATIONS from 2012 to 2017. He is an IEEE ComSoc Distinguished Lecturer for the term 2020–2021.
\end{IEEEbiography}	
	
\end{document}